\documentclass[11pt,reqno]{amsart}
\usepackage[english]{babel}
\usepackage{mathrsfs}
\usepackage{xcolor}
\usepackage[pdfencoding=auto, psdextra]{hyperref}
\hypersetup{colorlinks=true,allcolors=[rgb]{0,0,0.6}}
\usepackage[hmargin=2.7cm,vmargin=2.6cm]{geometry}
%\usepackage{showlabels}

%%%%%%%%%%%%%%%%%%%%%%%%%%%%%%%%
\newtheorem{theorem}{Theorem}[section]
\newtheorem{proposition}[theorem]{Proposition}
\newtheorem{lemma}[theorem]{Lemma}
\newtheorem{cor}[theorem]{Corollary}
\theoremstyle{definition}
\newtheorem{definition}[theorem]{Definition}

\theoremstyle{remark}
\newtheorem{remark}[theorem]{Remark}
\numberwithin{equation}{section}
%%%%%%%%%%%%%%%%%%%%%%%%%%%%%%%%

%%%%%%%%%%%%%%%%%%%%%%%%%%%%%%%%
\def\N{\mathbb{N}}
\def\R{\mathbb{ R}}
\def\Imm{\Im\mathrm{m}}
\def\Ree{\Re\mathrm{e}}
\def\tr{\mathrm {tr}}
%%%%%%%%%%%%%%%%%%%%%%%%%%%%%%%%

\begin{document}

\title[KMS boundary conditions]{High temperature convergence of the KMS boundary conditions: The Bose-Hubbard model on a finite graph}
\author{Z. Ammari}
\address{Univ Rennes, [UR1], CNRS, IRMAR - UMR 6625, F-35000 Rennes, France.}
\email{zied.ammari@univ-rennes1.fr}
\author{A. Ratsimanetrimanana}
\address{BCAM - Basque Center for Applied Mathematics. Alameda de Mazarredo 14, E-48009 Bilbao, Spain.}
\email{aratsimanetrimanana@bcamath.org}
%\subjclass[2000]{Primary 54C40, 14E20; Secondary 46E25, 20C20}

\date{April 15, 2019.}

%\dedicatory{}

\keywords{Bose-Hubbard, KMS property,  Golden-Thompson and  Bogoliubov inequalities, Wigner measures, semiclassical analysis.}

\begin{abstract}
The Kubo-Martin-Schwinger condition  is a widely studied fundamental property in quantum statistical mechanics  which characterises  the thermal equilibrium states of quantum systems.  In the seventies, G.~Gallavotti and E.~Verboven, proposed an analogue to the KMS condition for classical mechanical systems and highlighted its  relationship with the Kirkwood-Salzburg equations and with the Gibbs equilibrium measures. In the present article, we prove that in a certain limiting regime of high temperature the classical  KMS condition can be derived from the quantum condition in the simple case of  the Bose-Hubbard dynamical system on a finite graph.  The main ingredients of the proof are  Golden-Thompson inequality,  Bogoliubov  inequality and semiclassical analysis.
\end{abstract}

\maketitle
\tableofcontents

\section{Introduction}
A $\mathscr{W}^*-$ dynamical system $(\mathscr{A},\tau_t)$  is a pair of a von Neumann  algebra of observables $\mathscr{A}$ and a one-parameter group of automorphisms $\tau_t$ on $\mathscr{A}$.  Consider for instance a finite dimensional Hilbert space $\mathfrak{H}$ then $\mathscr{A}$ can be chosen  to be the set of all operators  $\mathscr{B}(\mathfrak{H})$ and  $\tau_t$ to be the automorphism group defined by
$$
\tau_t(A)=e^{it H} A \,e^{-it H}
$$
for any $A\in\mathscr{A}$. The operator $H$ denotes the  Hamiltonian of a given quantum system and the couple $(\mathscr{A},\tau_t)$ describes  the dynamics.  According to quantum statistical physics such system admits a unique thermal equilibrium state  $\omega_\beta$ at inverse temperature $\beta$ given by,
\begin{equation}
\label{int.eq.3}
\omega_\beta (A) =\frac{\tr(e^{-\beta H} A)}{\tr(e^{-\beta H} )}\,.
\end{equation}
In general, the simplicity of the above statement  have to be nuanced.
In fact, the characterisation of thermal equilibrium in  statistical mechanics is a nontrivial question particularly for dynamical systems which have an infinite number of degrees of freedom, see \cite{MR1441540,MR0289084}. One of the important and most elegant characterisation of equilibrium states was noticed  by R.~Kubo, P.C.~Martin and J.~Schwinger in the late fifties. It is based in the following observations in finite dimension. In fact, one remarks by a simple computation that  the Gibbs state $\omega_\beta$ in \eqref{int.eq.3} satisfies for all $t\in \mathbb{R}$ and any $A,B\in\mathscr{A}$ the identity,
\begin{equation}
\label{int.eq.1}
\omega_\beta( A \;\tau_{t+i\beta}(B)) = \omega_\beta(  \tau_t(B)A)\,,
\end{equation}
where  $\tau_{t+i\beta}(\cdot)$ denotes an analytic extension of the automorphism
$\tau_t$ to complex times given by
$$
\tau_{t+i\beta}(B)=e^{(-\beta+it) H} B \,e^{(\beta-it) H} \,.
$$
More remarkable,  if one takes a state $\omega$ that satisfies the same condition as \eqref{int.eq.1}  then
$\omega$ should be the Gibbs state $\omega_\beta$ in \eqref{int.eq.3}. This indicates that the equation \eqref{int.eq.1}  singles out the thermal equilibrium states  among all possible states of a quantum system. In the  late sixties, R.~Haag, N.M.~Hugenholtz and  M.~Winnink suggested  the identity  \eqref{int.eq.1} as a criterion for equilibrium states and they named it the KMS boundary condition after Kubo, Martin and Schwinger \cite{MR0219283}. The subject of KMS states is bynow deeply studied  specially from an algebraic standpoint. For instance, various characterisation related to correlation inequalities and to variational principles have been derived (see e.g. \cite{MR628509,MR0252004,MR1441540}). Other perspectives have also been explored related for instance to the Tomita-Takasaki theory and to the Heck algebra and number theory (see e.g. \cite{MR1993788,MR0675642,MR1366621}).
 
\bigskip
 In the seventies, G.~Gallavotti and E.~Verboven, suggested an analogue to the KMS boundary condition \eqref{int.eq.1}  which is suitable for classical mechanical systems and highlighted its relationship with the Kirkwood-Salzburg equations and with the Gibbs equilibrium measures, see \cite{MR0449393}.  
 The derivation of such condition is based in the following heuristic argument. Consider a state $\omega_\hbar$ satisfying the KMS boundary condition 
 \begin{equation}
 \omega_\hbar\big(BA\big)= \omega_\hbar\big(A\,\tau_{i\hbar \beta}(B)\big)
 \end{equation} 
 at inverse temperature $\hbar\beta$,  where $\hbar$ refers to the reduced Planck constant. This relation yields
 \begin{equation}
 \omega_\hbar\bigg(\frac{AB-BA}{i\hbar}\bigg)= \omega_\hbar\bigg(A \; \frac{\tau_{i\hbar\beta}(B)-B}{i\hbar}\bigg)\,.
 \end{equation}
Assume for the moment  that the space $\mathfrak{H}=L^2(\mathbb{R}^d)$, so one can consider that  the Hamiltonian $H$ and the  observables $A,B$ are  given by $\hbar$-Weyl-quantized symbols (i.e., $H=h^{W,\hbar}$, 
$A=a^{W,\hbar}$ and $B=b^{W,\hbar}$ for some smooth functions $a$ and $b$ defined over the phase-space $\mathbb{R}^{2d}$). Then the semiclassical theory firstly tell us that 
\begin{equation}
\frac{AB-BA}{i\hbar} \underset{\hbar\to 0}{\longrightarrow} \{a,b\}\,,\qquad \text{and} \qquad  \frac{\tau_{i\hbar\beta}(B)-B}{i\hbar} 
\underset{\hbar\to 0}{\longrightarrow} \beta \,\{h,b\}\,,
\end{equation}
where $\{\cdot,\cdot\}$ is the Poisson bracket and  $h$ denotes the Hamiltonian of the corresponding classical system. Secondly, the quantum states $\omega_\hbar$ (or at least a subsequence) converge in a weak sense to a semiclassical probability measure  $\mu$ over $\mathbb{R}^{2d}$ when $\hbar\to 0$.  Therefore, the expected classical KMS condition that should in principle characterise the statistical equilibrium for classical mechanical systems is formally given by
\begin{equation}
\label{int.eq.2}
\mu\big( \{a,b\}\big)=\beta \,\mu\big( a \,\{h,b\}\big)\,,
\end{equation}
for any smooth functions $a,b$ on the phase-space $\mathbb{R}^{2d}$. Here the notation $\mu(f)=\int_{\mathbb{R}^{2d}} f(u) \,d\mu(u)$ is used. After the works \cite{MR0449393,MR0408670}, M.~Aizenman et al.~showed in  \cite{MR0443784}  that the condition  \eqref{int.eq.2} singles out thermal equilibrium states for infinite classical mechanical systems among all probability measures. In particular, the only measure $\mu$ satisfying \eqref{int.eq.2} in our example is the Gibbs measure defined with respect to the Lebesgue measure by the density,
\begin{equation}\label{rough_Gibbs}
\mu_{\beta}= \frac{1}{z(\beta)} \; e^{-\beta h(u)}\,,
\end{equation}
where $z(\beta)$ is a normalisation constant. Note that  the above Gibbs measure $\mu_{\beta}$ can also be  characterised as an equilibrium state  by means of variational methods and maximum entropy  properties or by correlation inequalities, see \cite{MR1441540}.  Nevertheless,  in this note we  focus only in  the KMS boundary conditions for classical and quantum systems.  In general, the derivation of the classical KMS boundary  condition \eqref{int.eq.2}  from the quantum one  is a non trivial and interesting question which  depends on the considered dynamical system.  In our opinion, the classical KMS condition is an elegant characterisation of statistical equilibrium which deserves more attention from PDE analysts. Although this condition  has been studied in some subsequent works  (see e.g.  \cite{MR0405121,MR761332,MR0469018,MR0449390,MR0441175,MR0484128}), it seems not largely known.

\bigskip
Our main purpose in this  note, is to provide a  rigorous and simple  proof for the  derivation of the classical KMS condition \eqref{int.eq.2} as a consequence of the relation \eqref{int.eq.1}  and the classical limit, $\hbar\to 0$, for the Bose-Hubbard dynamical system on a finite graph. The system we consider  is governed by a typical many-body quantum Hamiltonian which  can be written in terms of creations annihilations operators and which is restricted to a finite volume.  Our proof of convergence is based on the  Golden-Thompson inequality,  the Bogoliubov  inequality and the semiclassical analysis in the Fock space. Since the classical phase-space of the system considered  here  is finite dimensional it is possible by change of representation to convert  the problem to a  semiclassical analysis in a $L^2$ space. However, we avoid such a change as we  lose  most  of the interesting  insights and structures in our problem. In particular, we will rely on the analysis on the phase-space given in \cite{MR2465733}. Our interest in the Bose-Hubbard  system is motivated by the establishment of a strong link between classical and quantum KMS conditions so that it leads  to the  exchange of the  thermodynamic and the classical limits  for infinite dynamical systems and to the  investigation of phase transitions.   Also note that  from a physical standpoint, the Bose-Hubbard model  is  a quite relevant model describing  ultracold atoms in optical lattices with an observed phenomenon of superfluid-insulator transition.    From a wider perspective, the question considered here is also related to the recent trend initiated by M.~Lewin, P.T.~Nam and N.~Rougerie   \cite{MR3366672, MR3787331} about the  Gibbs measures for the nonlinear Schr\"odinger equations (see also  \cite{MR3719544} where these  investigations were continued). In this respect, the KMS boundary conditions could provide  an  alternative proof for the convergence of Gibbs states. These  questions will be considered elsewhere and here we will only focus on the Bose-Hubbard model on finite graph which is a much simpler model.

\bigskip
The article is organised as follows:
\begin{itemize}
\item In Section \ref{setup}, the Bose-Hubbard Hamiltonian on a finite graph is introduced and its relationship  with the discrete Laplacian is highlighted. 
\item Section \ref{KMS_micro}, is dedicated to the description of the unique KMS state of the Bose-Hubbard dynamical system at inverse  temperature $\hbar\beta$ and to the extension of the dynamics to complex times.   
\item Section \ref{convergence}, contains our main contribution stated in  Theorem \ref{KMSclassic}. Indeed, we prove that the KMS states  of the Bose-Hubbard system converge, up to subsequences,  to  semiclassical (Wigner) measures satisfying the classical KMS condition. The analysis is based on semiclassical methods in the Fock space developed in \cite{MR2465733}.
\item Finally, in Section \ref{sec:CKMS}, we remark that any probability measure satisfying the classical KMS condition 
is indeed  the Gibbs equilibrium measure for the Discret nonlinear Schr\"odinger equation. The proof of  this fact is   borrowed from the work \cite{MR0443784}. 
\end{itemize}

\section{Quantum Hamiltonian on a finite graph}\label{setup}
\emph{The discrete Laplacian:}  Consider a  finite   graph $G=(V,E)$ where $V$ is the set of vertices  and $E$ is the set of edges.  Assume furthermore that $G$ is a simple  undirected  graph and let $\deg(x)$ denotes the degree of each vertices $x\in V$. In the following, we denote  the graph equivalently $G$ or $V$.  Consider  the Hilbert space of all complex-valued functions on $V$ denoted as  $\ell ^{2} \left(  G \right)$ and 
endowed with its natural scalar product and with the orthonormal basis  $\left( e_{x} \right) _{x \in V} $ such that
$$
e_{x} (y) := \delta_{x,y}, \quad \forall x, y \in V.
$$ 
Then the  discrete Laplacian on the graph $G$ is a non-positive bounded operator on $\ell ^{2} \left(  G \right) $ given by,
$$
 \left(\Delta_{G} \psi \right) (x)  := - deg(x) \psi (x) + \sum_{y\in V,y \sim x} \psi (y),
$$
with the above sum running over the nearest neighbours of $x$ and $\psi$  is any function   in $\ell^{2} (G)$. 

\bigskip
\emph{The Bose-Hubbard Hamiltonian:}  Consider the bosonic Fock space,
$$
\mathfrak{F} = \mathbb{C} \oplus \bigoplus_{n=1}^{\infty} \otimes_{s}^n\ell ^{2}  \left(  G \right)^{}\,,
$$
where $\otimes_{s}^n\ell ^{2} \left(  G \right)^{}$ denotes the symmetric $n$-fold tensor product of $\ell ^{2}  \left(  G \right)$. So, any $\psi\in \otimes_{s}^n\ell ^{2}\left(  G \right)^{}$ is a functions $\psi:V^n\to \mathbb{C}$  invariant under any permutation of its variables. Introduce the usual  creation and annihilation operators acting on the bosonic Fock space,
$$
a_{x} = a(e_{x}) \quad \text{and} \quad a^{\ast}_{x} = a^{\ast}(e_{x})\,,
$$
then the following canonical commutation relations are satisfied,
$$
\left[ a_{x}, a^{*}_{y} \right] =  \,\delta_{x,y} \,\mathbf{1}_{\mathfrak{F}} \quad \text{and} \quad \left[ a^{*}_{x}, a^{*}_{y} \right] =\left[ a_{x}, a_{y} \right]= 0,  \quad \forall x,y \in V\,.
$$
%We are now in position to introduce the Bose-Hubbard Hamiltonian of the system. 
\begin{definition}[Bose-Hubbard Hamiltonian]  \label{bosehubbard}
For $\varepsilon\in(0,\bar\varepsilon)$, $\lambda>0$ and  $\kappa<0$, define the $\varepsilon$-dependent Bose-Hubbard Hamiltonian on the bosonic Fock space $\mathfrak{F}$ by
$$
H_{\varepsilon}:=   \frac{\varepsilon}{2}  \sum_{x,y \in V: y \sim x }   (a^{*}_{x}- a^{*}_{y})  (a_{x} -a_{y})  +  \,\frac{\varepsilon ^{2} \lambda}{2} \sum_{x\in V} a^{*} _{x} a^{*} _{x} a _{x}a _{x} - \varepsilon \kappa \sum_{x\in V} a^{*}_{x} a_{x}.
$$
Here $\lambda$ is the on-site interaction, $\kappa$ is the chemical potential and $\varepsilon$ is the semiclassical parameter. 
\end{definition}

\begin{remark}
The first term of the Hamiltonian $H_{\varepsilon}$ is the kinetic part of the system and corresponds to the second quantization of the  discrete Laplacian. Indeed, one can write
 $$
\frac{1}{2}  \sum_{x,y \in V: y \sim x }   (a^{*}_{x}- a^{*}_{y})  (a_{x} -a_{y})=\sum_{x\in V} deg(x) \,a^*_x a_x - \sum_{x,y\in V, y\sim x} a^{*}_{x} a_{y}=
 \mathrm{d}\Gamma (- \Delta_{G})\,, 
$$
where $\mathrm{d}\Gamma (\cdot)$  is  the second quantization operator defined on the bosonic Fock space by 
\begin{equation}
\label{dgam}
\mathrm{d}\Gamma (A)_{|\otimes^n_s\ell^2(G)}= \sum_{j=1}^n 1\otimes \cdots\otimes A^{(j)} \otimes \cdots \otimes 1\,,
\end{equation}
for any given operator $A\in \mathscr{B}(\ell^2(G))$ and where $A^{(j)}$ means that $A$ acts only in the $j$-th component.  
\end{remark}

\medskip
\noindent 
The following rescaled \emph{number operator} will be often used,
\begin{equation}
\label{numberop}
N_\varepsilon:= \varepsilon\, \mathrm{d}\Gamma (1_{\ell^2(G)})= \varepsilon\, \sum_{x\in V} a^*_x a_x\,.
\end{equation}
Therefore, one can rewrite the Bose-Hubbard Hamiltonian  as follows
\begin{equation*}
%\label{bosehubbard2}
H^{ \varepsilon} =  \varepsilon \,\mathrm{d}\Gamma \big(- \Delta_{G}-\kappa 1_{\ell^2(G)}\big) + \varepsilon^{2} \,\frac{\lambda}{2} \,I_{G} \,,
\end{equation*}
with   the interaction denoted as  
$$
I_{G}:= \sum_{x\in V} a^{*} _{x} a^{*} _{x} a _{x}a _{x}\,.
$$
Since the discrete Laplacian $\Delta_{G}$ is self-adjoint, it is easy to check that $H_{\varepsilon}$ defines an (unbounded) self-adjoint operator 
on the Fock space $\mathfrak{F}$ over its natural domain (for more details see e.g. \cite[Appendix A]{MR3379490}). Remark that the  operator $- \Delta_{G}-\kappa 1_{\ell^2(G)}$ is positive since the chemical potential $\kappa$ is negative.

\section{Quantum KMS condition} \label{KMS_micro}
The Bose-Hubbard Hamiltonian defines a $\mathscr{W}^*$ -dynamical system $(\mathfrak{M}, \alpha_t)$ where $\mathfrak{M}$ is the von Neumann algebra of all bounded operators $\mathscr{B}(\mathfrak{F})$ on the Fock space and $\alpha_t$ is the one parameter group of automorphisms defined by 
$$
 \alpha_t(A)= e^{i\frac{t}{\varepsilon} H_{\varepsilon}} \,A \,  e^{-i\frac{t}{\varepsilon}H_{\varepsilon}} \,,
$$
for any $A\in \mathfrak{M}$. The above group of automorphisms $\alpha_t$ admits a generator $S:\mathfrak{M}\to \mathfrak{M}$ with a domain  
$$
\mathcal{D}(S)=\{A\in\mathfrak{M}, \;[H_\varepsilon, A]\in\mathfrak{M}\}\,,
$$
and satisfies  for any $A\in \mathcal{D}(S)$, 
$$
S(A)=\lim_{t\to 0} \frac{\alpha_t(A) -A}{t}= \frac{i}{\varepsilon}[H_{\varepsilon}, A]\,.
$$
The latter convergence is with respect to the $\sigma$-weak topology on $\mathfrak{M}$. Remark also that the dynamics $\alpha_t$ depend on the semiclassical parameter $\varepsilon$. 

\bigskip
Next, we point out that the dynamical system $(\mathfrak{M}, \alpha_t)$ admits a unique KMS state  at inverse temperature $\varepsilon \beta$. Here $\beta>0$ is a fixed, $\varepsilon$-independent, effective inverse temperature. 
\begin{lemma}[Partition function]\label{class-trace}\mbox{}\newline \label{sec3.lem1}
Since the chemical potential $\kappa$ is negative  then 
$$
\mathrm{tr}_{\mathfrak{F}}\left( e^{-\beta H_{\varepsilon} }\right) < \infty.
$$
\end{lemma}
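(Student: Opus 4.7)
The plan is to bound $H_\varepsilon$ from below by a positive multiple of the number operator and to observe that, because the graph $G$ is finite, the partition function of the latter is a convergent geometric-type series.

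First, I would decompose $H_\varepsilon$ into its three pieces and check their signs. The kinetic part $\varepsilon\,\mathrm{d}\Gamma(-\Delta_G)$ is non-negative because $-\Delta_G \geq 0$ on $\ell^2(G)$, and the on-site interaction $\tfrac{\varepsilon^2\lambda}{2}\,I_G$ is non-negative since $a_x^{*}a_x^{*}a_xa_x = a_x^{*}a_x(a_x^{*}a_x - \mathbf{1}) \geq 0$ on each particle-number eigenspace. Dropping these two non-negative contributions yields the quadratic-form bound
\begin{equation*}
H_\varepsilon \,\geq\, \varepsilon\,\mathrm{d}\Gamma\bigl(-\Delta_G - \kappa\,\mathbf{1}_{\ell^2(G)}\bigr) \,\geq\, -\kappa\,\varepsilon\,\mathrm{d}\Gamma\bigl(\mathbf{1}_{\ell^2(G)}\bigr) \,=\, -\kappa\,N_\varepsilon,
\end{equation*}
with $-\kappa>0$.

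Second, $N_\varepsilon$ is diagonal on $\mathfrak{F}$ with eigenvalues $\{\varepsilon n\}_{n\in\N}$ of finite multiplicity $\binom{n+|V|-1}{|V|-1}$, so $-\kappa N_\varepsilon$ has compact resolvent and eigenvalues tending to $+\infty$. The Courant--Fischer min-max principle, applied to the form inequality above, forces $H_\varepsilon$ to have compact resolvent as well and gives, for eigenvalues counted with multiplicity and listed increasingly, the bound $\lambda_n(H_\varepsilon) \geq -\kappa\,\lambda_n(N_\varepsilon)$. Using that $x\mapsto e^{-\beta x}$ is decreasing and summing over $n$ then yields
\begin{equation*}
\mathrm{tr}_\mathfrak{F}\bigl(e^{-\beta H_\varepsilon}\bigr) \,\leq\, \mathrm{tr}_\mathfrak{F}\bigl(e^{\beta\kappa N_\varepsilon}\bigr).
\end{equation*}

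Finally, the right-hand side is explicit. Setting $q := e^{\beta\varepsilon\kappa}\in(0,1)$ (since $\kappa<0$) and summing over the $n$-particle sectors,
\begin{equation*}
\mathrm{tr}_\mathfrak{F}\bigl(e^{\beta\kappa N_\varepsilon}\bigr) \,=\, \sum_{n=0}^{\infty}\binom{n+|V|-1}{|V|-1}\,q^{n} \,=\, \frac{1}{(1-q)^{|V|}} \,<\, \infty,
\end{equation*}
which closes the argument. There is no substantial obstacle: the proof uses only positivity of $I_G$ and of $-\Delta_G$, negativity of $\kappa$, and the finiteness of $V$. One could alternatively diagonalize $-\Delta_G-\kappa\,\mathbf{1}$ and obtain eigenvalues $\mu_1,\ldots,\mu_{|V|}>0$, reading off the sharper quantity $\prod_{i=1}^{|V|} (1-e^{-\beta\varepsilon\mu_i})^{-1}$ in place of the bound above.
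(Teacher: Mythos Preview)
Your argument is correct and fully self-contained, but it differs from the paper's route. The paper invokes the Golden--Thompson inequality: writing $H_\varepsilon = H_0 + V$ with $H_0 = \varepsilon\,\mathrm{d}\Gamma(-\Delta_G - \kappa\,\mathbf{1})$ and $V = \tfrac{\varepsilon^2\lambda}{2}I_G \geq 0$, one gets $\mathrm{tr}(e^{-\beta H_\varepsilon}) \leq \mathrm{tr}(e^{-\beta H_0} e^{-\beta V}) \leq \mathrm{tr}(e^{-\beta H_0})$, and the finiteness of the free-boson partition function is then read off from Bratteli--Robinson. Your approach instead uses only the operator inequality $H_\varepsilon \geq -\kappa N_\varepsilon$ together with min--max, which is more elementary and yields the same explicit bound without any matrix-inequality machinery. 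One small remark: since $H_\varepsilon$ commutes with $N_\varepsilon$ and each $n$-particle sector is finite-dimensional, the compact-resolvent and eigenvalue comparison can be done sector by sector, sidestepping any form-domain subtleties you might otherwise worry about. The paper's choice is natural in context because Golden--Thompson is reused later (in the number estimates of the Appendix), so invoking it here keeps the toolkit uniform; your approach, on the other hand, is cleaner as a standalone proof of this particular lemma.
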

\begin{proof} 
It is a consequence of \cite[Proposition 5.2.27]{MR1441540} and the Golden-Thompson inequality. The latter, see \cite{MR3202886},  says that for any Hermitian matrices $A$ and $B$ one has,
\begin{equation}\label{thompson1}
\text{tr}\left( e^{A +B }\right) \leq \text{tr} \left( e^{A}\, e^{ B }\right)\,.
\end{equation}
\end{proof}
\begin{definition}[Gibbs state]\mbox{}\newline
The Gibbs equilibrium  state of the Bose-Hubbard system on a finite graph is well defined, according to Lemma \ref{sec3.eq.1},  and it is given by 
\begin{equation}
\label{sec3.eq.1}
\omega_\varepsilon (A) =\frac{\tr_{\mathfrak{F}}(e^{-\beta H_{\varepsilon}} A)}{\tr_{\mathfrak{F}}(e^{-\beta H_{\varepsilon}} )}\,.
\end{equation}
\end{definition}
For the sake of completeness, we recall some useful details concerning the KMS states.   One says that $A\in\mathfrak{M}$ is an \emph{entire  analytic element} of $ \alpha_t$ if 
there exists a function $f: \mathbb{C}\to \mathfrak{M}$ such that $f(t)= \alpha_t(A)$ for all $t\in\mathbb{R}$ and such that for any  trace-class operator $\rho\in \mathfrak{M}$ the function $z\in\mathbb{C}\to \tr(\rho f(z))$ is analytic. Let $\mathfrak{M}_{ \alpha}$ denotes the set of entire analytic elements for $ \alpha$, then it is known that 
$\mathfrak{M}_{ \alpha}$ is dense in $\mathfrak{M}$ with respect to the $\sigma$-weak topology. For more details on analytic elements, see \cite[section 2.5.3]{MR887100}. In particular, by \cite[Definition 2.5.20]{MR887100},  an element $A\in \mathfrak{M}$ is entire analytic  if and only if $A\in \mathcal{D}(S^n)$ for all $n\in\mathbb{N}$  and for any $t>0$ the series below are absolutely convergent,
\begin{equation}
\label{sec3.eq.2} 
\sum_{n=0}^\infty \frac{t^n}{n!} \, \| S^n(A)\|<\infty\,.
\end{equation}
Remark that on the set of entire analytic elements $\mathfrak{M}_\alpha$, the dynamics $\alpha_t$ can be extends to complex times. Indeed, 
$\alpha_z(A)$ is well defined, for any $A\in\mathfrak{M}_\alpha$, by the following absolutely convergent series,
$$
\alpha_z(A)= \sum_{n=0}^\infty \frac{z^n}{n!}  \;S^n(A)\,, \quad \forall z\in\mathbb{C}\,.
$$
We say that a state $\omega$ is a $(\alpha_t,\varepsilon\beta)$-KMS state if and only if $\omega$ is normal and for any $A,B\in \mathfrak{M}_{\alpha}$,
\begin{equation}
\label{sec3.KMS}
\omega( A \;\alpha_{i\varepsilon\beta}(B)) = \omega(  BA)\,.
\end{equation}
Remark that the above identity is known to be equivalent to the condition  stated in the introduction \eqref{int.eq.1}.  In particular, the KMS states are  stationary  states with respect to the dynamics. 
\begin{proposition} 
The Gibbs state $\omega_\varepsilon$ defined by \eqref{sec3.eq.1} is the unique KMS state of the $\mathscr{W}^*$ -dynamical system $(\mathfrak{M}, \alpha_t)$ at the inverse temperature $\varepsilon\beta$. 
\end{proposition}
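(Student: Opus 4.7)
The plan is to prove existence and uniqueness separately.

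For \emph{existence}, the approach is direct computation. Lemma \ref{sec3.lem1} guarantees that $e^{-\beta H_\varepsilon}$ is trace-class, so $\omega_\varepsilon$ is a normal state. For any entire analytic $A,B\in\mathfrak{M}_\alpha$, the series defining $\alpha_z(B)$ converges absolutely, and one checks that on this domain the analytic continuation coincides with the operator-theoretic formula
$$
\alpha_{i\varepsilon\beta}(B)=e^{-\beta H_\varepsilon}\,B\,e^{\beta H_\varepsilon}.
$$
Substituting into the definition of $\omega_\varepsilon$ and using cyclicity of the trace (justified because $e^{-\beta H_\varepsilon}$ is trace-class and $A,B$ are bounded) yields
$$
\omega_\varepsilon\bigl(A\,\alpha_{i\varepsilon\beta}(B)\bigr)=\frac{\tr(e^{-\beta H_\varepsilon}A\,e^{-\beta H_\varepsilon}B\,e^{\beta H_\varepsilon})}{\tr(e^{-\beta H_\varepsilon})}=\frac{\tr(e^{-\beta H_\varepsilon}BA)}{\tr(e^{-\beta H_\varepsilon})}=\omega_\varepsilon(BA),
$$
which is precisely the KMS condition \eqref{sec3.KMS}.

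For \emph{uniqueness}, the strategy is to exploit the particularly simple spectral structure of $H_\varepsilon$. Since both the kinetic term $\mathrm{d}\Gamma(-\Delta_G-\kappa)$ and the interaction $I_G$ preserve the number of particles, $H_\varepsilon$ leaves each sector $\otimes_s^n\ell^2(G)$ invariant; these sectors are finite dimensional because $V$ is finite, so $H_\varepsilon$ has pure point spectrum with an orthonormal eigenbasis $\{\varphi_n\}_{n\in\mathbb{N}}$ and eigenvalues $\{E_n\}$. Any normal state $\omega'$ on the type I factor $\mathfrak{M}=\mathscr{B}(\mathfrak{F})$ is of the form $\omega'(A)=\tr(\rho A)$ for some density matrix $\rho$. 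The rank-one operators $|\varphi_p\rangle\langle\varphi_q|$ are entire analytic because $\alpha_z(|\varphi_p\rangle\langle\varphi_q|)=e^{iz(E_p-E_q)/\varepsilon}|\varphi_p\rangle\langle\varphi_q|$ for all $z\in\mathbb{C}$.

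I would then feed $A=|\varphi_p\rangle\langle\varphi_q|$ and $B=|\varphi_r\rangle\langle\varphi_s|$ into the KMS condition \eqref{sec3.KMS}. Splitting into the four cases according to whether $q=r$ and whether $s=p$, the off-diagonal cases force $\rho_{qr}:=\langle\varphi_q,\rho\varphi_r\rangle=0$ for $q\neq r$, while the diagonal identity reads $e^{-\beta(E_q-E_p)}\rho_{pp}=\rho_{qq}$. Together with the normalization $\tr(\rho)=1$ this pins down $\rho=e^{-\beta H_\varepsilon}/\tr(e^{-\beta H_\varepsilon})$, i.e. $\omega'=\omega_\varepsilon$.

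The main technical obstacle I expect is the rigorous identification of $\alpha_{i\varepsilon\beta}(B)$ with $e^{-\beta H_\varepsilon}Be^{\beta H_\varepsilon}$ on the set $\mathfrak{M}_\alpha$, since $H_\varepsilon$ is unbounded. This is however a standard matter of comparing the norm-convergent series \eqref{sec3.eq.2} with the operator-valued analytic function $z\mapsto e^{-iz H_\varepsilon/\varepsilon}Be^{iz H_\varepsilon/\varepsilon}$, which can be invoked from \cite[Section 2.5.3]{MR887100}. Once this is in place, both existence and uniqueness follow from the bookkeeping above.
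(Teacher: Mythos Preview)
Your existence argument is essentially identical to the paper's: both verify $\alpha_{i\varepsilon\beta}(B)=e^{-\beta H_\varepsilon}Be^{\beta H_\varepsilon}$ on $\mathfrak{M}_\alpha$ and invoke cyclicity of the trace.

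For uniqueness, your route is correct but genuinely different from the paper's. The paper argues abstractly: from the KMS identity and cyclicity it deduces $\rho B=e^{-\beta H_\varepsilon}Be^{\beta H_\varepsilon}\rho$ for all $B\in\mathfrak{M}_\alpha$; taking $B=1_D(H_\varepsilon)$ shows that $\rho$ commutes with every spectral projection of $H_\varepsilon$, and then one sees that $e^{\beta H_\varepsilon}\rho$ lies in the commutant of $\mathscr{B}(1_D(H_\varepsilon)\mathfrak{F})$ for each bounded $D$, forcing $\rho=c\,e^{-\beta H_\varepsilon}$. Your argument instead exploits the concrete spectral decomposition of $H_\varepsilon$ (discrete spectrum because each particle sector is finite dimensional) and tests the KMS relation against rank-one operators $|\varphi_p\rangle\langle\varphi_q|$, which are manifestly entire analytic with $\alpha_z$ acting by a scalar phase. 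Your approach is more elementary and makes the bookkeeping completely explicit; the paper's argument is slightly more robust in that it never appeals to an eigenbasis and would transfer verbatim to any self-adjoint $H$ with $e^{-\beta H}$ trace-class, even in the presence of continuous spectrum on the complement of a core. Both reach the same conclusion without gaps.
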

\begin{proof}
 For $A,B\in \mathfrak{M}_{\alpha}$, one checks 
$$
\alpha_{i\varepsilon\beta}(B)= e^{-\beta H_\varepsilon} B e^{\beta H_\varepsilon}\,.
$$
The formula \eqref{sec3.eq.1} for the Gibbs state, gives 
$$
\omega_{\varepsilon}( A \;\alpha_{i\varepsilon\beta}(B)) =\frac{1}{\tr_{\mathfrak{F}}(e^{-\beta H_{\varepsilon}} )}  \tr_{\mathfrak{F}} \left(
A \,e^{-\beta H_\varepsilon} B\right) = \omega_\varepsilon( BA)\,.
$$ 
Reciprocally,  let $\omega$ be a $(\alpha_t,\varepsilon\beta)$-KMS state. In particular, there exists  a density matrix $\rho$ such that $\mathrm{tr}_{\mathfrak{F}}(\rho) =1$ and
$$
\omega(A) = \mathrm{tr}_{\mathfrak{F}}(\rho \, A), \quad \forall A \in \mathfrak{M}\,. 
$$
Using the KMS condition  \eqref{sec3.KMS} and the cyclicity of the trace, one proves for any $A\in\mathfrak{M}$,
$$
\mathrm{tr}(\rho \, B \,  A) = \mathrm{tr}( e^{-\beta H_{\varepsilon}} \, B \, e^{\beta H_{\varepsilon}}\rho \, A ).
$$
In particular, for any $B\in\mathfrak{M}_\alpha$,
\begin{equation}\label{egalite3}
 \rho \,B ={e}^{-\beta H_{\varepsilon}} B \, {e}^{\beta H_{\varepsilon}}\rho \, .
\end{equation}
Hence, one  remarks that $\rho$ commutes with any spectral projection of $H_\varepsilon$ by taking for instance $B=1_{D}(H_\varepsilon)$  in the equation \eqref{egalite3}.  Therefore, one concludes that  
\begin{equation*}
\label{egalite4}
{e}^{\beta H_{\varepsilon}} \, \rho \,B\,\underset{| 1_{D}(H_\varepsilon)\mathfrak{F}}{}=\,B \, e^{\beta H_{\varepsilon}}\rho \,\underset{| 1_{D}(H_\varepsilon)\mathfrak{F}}{}\;,
\end{equation*}
for any bounded Borel subset $D$ of $\mathbb{R}$ and any bounded operator $B$ satisfying $B =1_{D}(H_\varepsilon) B= B 1_{D}(H_\varepsilon)$.  So, the operator ${e}^{\beta H_{\varepsilon}}\, \rho$ commutes with any bounded operator over the subspaces $1_{D}(H_\varepsilon)\mathfrak{F}$. This implies that
$$
\rho = c \, e^{-\beta H_{\varepsilon}}\,,
$$
and then one concludes with the fact that $\mathrm{tr}(\rho) =1$.
\end{proof}

\section{Convergence}
\label{convergence}
In this section, we prove that the KMS condition \eqref{sec3.KMS}  converges, in the classical limit,   towards the classical KMS condition.   It is enough to prove such convergence for some specific observables $A,B\in\mathfrak{M}$. In fact, consider for $f,g\in \ell^2(G)$, 
\begin{equation}
\label{sec4.eq.AB}  
A=W(f)\,, \qquad \text{ and } \qquad  B=W(g)\,,
\end{equation}
where $W(\cdot)$ denotes the Weyl operator defined by,
\begin{equation}
\label{weyl}
W(f)=e^{i \sqrt{\varepsilon}\;\Phi(f)}\,, \qquad \text{ with  }  \quad \Phi(f)=\frac{a^*(f)+a(f)}{\sqrt{2}}\,.
\end{equation}
Let $\chi\in\mathscr{C}_0^\infty(\R)$ such that $0\leq \chi\leq 1$,  $\chi\equiv 1$ if $|x|\leq 1/2$ and $\chi\equiv 0$ if $|x|\geq 1$. Define, for $n\in\N$,  the cut-off functions $\chi_n$ as 
$$
\chi_n(\cdot)=\chi\big(\frac{\cdot}{n}\big)\,.
$$
Then, we are going to  consider only the following  smoothed observables,
\begin{eqnarray}
\label{sec4.eq.1}
A_n:= \chi_n(N_\varepsilon) \, A\,  \chi_n(N_\varepsilon)\,, \qquad 
\text{ and } \qquad B_n:= \chi_n(N_\varepsilon) \, B\,  \chi_n(N_\varepsilon)\,.
\end{eqnarray}
\begin{lemma}
\label{lem.smoothing}
For all $ \varepsilon>0$ and $n\in\N$, the elements $A_n$ and  $B_n$ given by \eqref{sec4.eq.1}  are entire analytic for the dynamics $\alpha_t$. 
\end{lemma}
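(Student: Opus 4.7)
The plan is to exploit the fact that the number operator $N_\varepsilon$ commutes with the Bose-Hubbard Hamiltonian $H_\varepsilon$, so that the cut-offs $\chi_n(N_\varepsilon)$ reduce the problem to a finite-dimensional subspace of the Fock space where everything is bounded. First I would verify that $[H_\varepsilon, N_\varepsilon] = 0$: the kinetic term $\mathrm{d}\Gamma(-\Delta_G - \kappa)$ is a second quantisation and therefore preserves every $n$-particle sector, while the interaction $I_G = \sum_x a^{*}_x a^{*}_x a_x a_x$ also preserves particle number since it contains as many creations as annihilations. Hence $H_\varepsilon$ commutes with any bounded Borel function of $N_\varepsilon$; in particular it commutes with $\chi_n(N_\varepsilon)$ and with the spectral projection $P_n := 1_{[0,n]}(N_\varepsilon)$.

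Next I would observe that, since $\ell^2(G)$ is finite-dimensional (the graph is finite) and $N_\varepsilon$ has discrete spectrum $\{\varepsilon k : k\in\N\}$, the range of $P_n$ is a finite-dimensional subspace $\mathfrak{F}_n \subset \mathfrak{F}$. Because $\chi_n$ is supported in $[-n,n]$, we have $\chi_n(N_\varepsilon) = \chi_n(N_\varepsilon) P_n = P_n \chi_n(N_\varepsilon)$, and consequently
\[
A_n = P_n A_n P_n, \qquad B_n = P_n B_n P_n.
\]
Moreover, $H_n := H_\varepsilon P_n = P_n H_\varepsilon P_n$ is a bounded self-adjoint operator on $\mathfrak{F}$ (it is the restriction of $H_\varepsilon$ to the finite-dimensional invariant subspace $\mathfrak{F}_n$, extended by zero). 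Since $A_n$ both vanishes on $\mathfrak{F}_n^{\perp}$ and has range contained in $\mathfrak{F}_n$, a direct computation gives
\[
[H_\varepsilon, A_n] = H_\varepsilon A_n - A_n H_\varepsilon = H_n A_n - A_n H_n = [H_n, A_n],
\]
which is bounded, so $A_n \in \mathcal{D}(S)$ with $S(A_n) = \tfrac{i}{\varepsilon}[H_n, A_n]$. Crucially, $[H_n, A_n]$ is still of the form $P_n (\cdot) P_n$, so the same argument iterates: $A_n \in \mathcal{D}(S^k)$ for every $k\in\N$, and
\[
S^k(A_n) = \Bigl(\frac{i}{\varepsilon}\Bigr)^{k}\,\mathrm{ad}^{k}_{H_n}(A_n).
\]

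To finish, I would bound the series \eqref{sec3.eq.2}. The standard commutator estimate gives $\|\mathrm{ad}^{k}_{H_n}(A_n)\| \leq (2\|H_n\|)^{k}\,\|A_n\|$, and since $W(f)$ is unitary we have $\|A_n\| \leq 1$. Therefore
\[
\sum_{k=0}^{\infty}\frac{t^{k}}{k!}\,\|S^{k}(A_n)\| \;\leq\; \sum_{k=0}^{\infty}\frac{1}{k!}\Bigl(\frac{2t\,\|H_n\|}{\varepsilon}\Bigr)^{k} \;=\; \exp\!\Bigl(\tfrac{2t\,\|H_n\|}{\varepsilon}\Bigr) \;<\; \infty
\]
for every $t>0$, which by the criterion recalled after \eqref{sec3.eq.2} proves that $A_n$ is entire analytic for $\alpha_t$. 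The argument for $B_n$ is identical. The only subtlety in the argument is checking that the cut-offs genuinely reduce the whole computation to the finite-dimensional subspace $\mathfrak{F}_n$, but this follows cleanly from the commutation $[H_\varepsilon, N_\varepsilon]=0$; once that is in hand the rest is essentially a bounded-operator power-series estimate.
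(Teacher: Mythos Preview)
Your proof is correct and follows essentially the same route as the paper's. Both arguments hinge on the strong commutation $[H_\varepsilon,N_\varepsilon]=0$ and the observation that $\chi_n(N_\varepsilon)=1_{[0,n]}(N_\varepsilon)\,\chi_n(N_\varepsilon)$, which lets one replace $H_\varepsilon$ in every iterated commutator by the bounded truncation $\tilde H_\varepsilon:=1_{[0,n]}(N_\varepsilon)H_\varepsilon$ (your $H_n$); the convergence of the series \eqref{sec3.eq.2} then follows from the elementary bound $\|\mathrm{ad}^k_{\tilde H_\varepsilon}(A_n)\|\le (2\|\tilde H_\varepsilon\|)^k$, which you spell out and the paper leaves implicit.
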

\begin{proof}
By functional calculus, remark that $1_{[0,n]}(N_\varepsilon) \chi_n(N_\varepsilon)= \chi_n(N_\varepsilon)$. Moreover, 
the number operator $N_\varepsilon$ and the Hamiltonian  $H_\varepsilon$ commute in the strong sense. So,  
the generator $S$  of the dynamics $\alpha_t$ satisfies for $k\in\N$,
\begin{eqnarray*}
S^{k}(A_n)& =& \left(\frac{i}{\varepsilon} \right)^{k} \, 
[H_\varepsilon,\cdots[H_\varepsilon, A_n]\cdots]\,, \\
& =& \left(\frac{i}{\varepsilon} \right)^{k} \, 
[\tilde H_\varepsilon,\cdots[\tilde H_\varepsilon, A_n]\cdots]\,,
\end{eqnarray*}
with $\tilde H_\varepsilon= 1_{[0,n]}(N_\varepsilon) H_\varepsilon$ a  bounded operator. Hence, the estimate \eqref{sec3.eq.2} is satisfied and so $A_n$ is a entire analytic element. 
\end{proof}
Recall that the $(\alpha_t,\varepsilon\beta)$-KMS state $\omega_\varepsilon$ satisfies in particular the condition,
$$
\omega_\varepsilon\left( A_n \, \alpha_{i\varepsilon\beta}(B_m)\right)=
\omega_\varepsilon\left( B_m \,A_n \right)\,.
$$
A simple computation then leads to the main identity,
\begin{equation}
\label{sec4.eq.main}
\omega_\varepsilon\left( A_n \, \frac{\alpha_{i\varepsilon\beta}(B_m)-B_m}{i\varepsilon}\right)=
\omega_\varepsilon\left( \frac{[B_m,A_n]}{i\varepsilon}\right)\,.
\end{equation}
Our aim  is to take  the  classical limit $\varepsilon\to 0$   in the above relation and to prove the  convergence for the left and right hand sides so that we obtain the classical KMS boundary conditions.  In order to take such limit, one needs to use the  semiclasscial (Wigner) measures of $\{\omega_\varepsilon\}_{\varepsilon\in(0,\bar\varepsilon)}$.  Recall that $\mu$ a Borel probability measure on the phase-space $\ell^2(G)$ is a Wigner measure of  $\{\omega_\varepsilon\}_{\varepsilon\in(0,\bar\varepsilon)}$ if there exists a subsequence $(\varepsilon_k)_{k\in\N}$ such that $\lim_{k\to \infty}\varepsilon_k= 0$ and for any $f\in\ell^2(G)$, 
\begin{equation}
\label{sec4.eq.wigner}
\lim_{k\to \infty}  \omega_{\varepsilon_k}\left( W(f)\right)= \int_{\ell^2(G)}  e^{i\sqrt{2} \Ree\langle f,u\rangle} \;d\mu\,.
\end{equation}
Note that the Weyl operator depends here on the parameter $\varepsilon_k$ instead of $\varepsilon$ as in  \eqref{weyl}.   According  to \cite[Thm.~6.2]{MR2465733}   and Lemma \ref{particle_number}, the family of KMS states $\{\omega_\varepsilon\}_{\varepsilon\in(0,\bar\varepsilon)}$ admits a non-void set of Wigner probability measures. Later on, we will prove that this set of measures reduces to a singleton given by the Gibbs equilibrium  measure. But for the moment, we will use subsequences as in the definition \eqref{sec4.eq.wigner}.

\medskip

 The classical Hamiltonian system related to the Bose-Hubbard model  is given by the \emph{Discrete Nonlinear Schr\"odinger}  equation, see \cite{MR2742565}. Its energy functional  (or Hamiltonian )  is given by 
 \begin{equation}
 \label{hamdnls}
 h(u)=-\langle u, \Delta_G \,u\rangle -\kappa \|u\|^2+\frac{\lambda}{2}\sum_{j\in V} |u(j)|^4\,.
 \end{equation}
Note that $\ell^2(G)$ is  a complex Hilbert space and  so in our framework the Poisson structure is defined as follows. 
For $F,G$ smooth functions on $\ell^{2}(G)$, the Poisson bracket is given by
 \begin{equation}\label{poisson}
 \left\lbrace F,G \right\rbrace := \frac{1}{i} \; \left(\partial_{u} F\cdot \partial_{\bar u} G- \partial_{u} G\cdot \partial_{\bar u} F\right)\,.
 \end{equation}
Here $\partial_{u}$ and $\partial_{\bar u}$ are the standard differentiation with respect to $u$ or $\bar u$.  

\medskip
Our main result is stated below. 
\begin{theorem}[Classical KMS condition]
 \label{KMSclassic}
Let $\omega_\varepsilon$ by the KMS state of the Bose-Hubbard $\mathscr{W}^*$-dynamical system $(\mathscr{A},\alpha_t)$ at inverse temperature $\varepsilon \, \beta$. Then any semiclassical (Wigner) measure of $\omega_\varepsilon$ satisfies the classical KMS condition, i.e., for any $F,G$ smooth functions on $\ell^{2}(G)$, 
\begin{equation}
\label{CKMS}
 \beta\; \mu(\left\lbrace h, G\right\rbrace \, F )=\mu(\left\lbrace F,G\right\rbrace )\,,
\end{equation}
where the classical Hamiltonian $h$ is given by \eqref{hamdnls} and $\{\cdot , \cdot\}$ denotes the Poisson bracket recalled in \eqref{poisson}. 
\end{theorem}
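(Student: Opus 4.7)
The plan is to take the classical limit $\varepsilon \to 0$ of the identity \eqref{sec4.eq.main} applied to the Weyl-type observables
\[
A_n = \chi_n(N_\varepsilon)\,W(f)\,\chi_n(N_\varepsilon), \qquad B_m = \chi_m(N_\varepsilon)\,W(g)\,\chi_m(N_\varepsilon), \qquad f,g\in \ell^2(G),
\]
to extract a Wigner measure $\mu$ along a subsequence $\varepsilon_k\to 0$ (using the moment bound $\sup_\varepsilon\omega_\varepsilon(N_\varepsilon^p)<\infty$ of Lemma \ref{particle_number} for tightness), then remove the cut-offs as $n,m\to\infty$, and finally extend the resulting identity from the Weyl symbols $F_h(u):=e^{i\sqrt{2}\,\Ree\langle h,u\rangle}$ to arbitrary smooth $F,G$ on $\ell^2(G)$ by density.

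For the right-hand side of \eqref{sec4.eq.main}, the Weyl relation $W(f)W(g)=e^{-i\varepsilon\Imm\langle f,g\rangle/2}W(f+g)$ together with the commutation $[H_\varepsilon,N_\varepsilon]=0$ give, after some bookkeeping of the cut-offs,
\[
\omega_{\varepsilon_k}\!\left(\frac{[B_m,A_n]}{i\varepsilon_k}\right) \;=\; \frac{2\sin(\varepsilon_k\Imm\langle f,g\rangle/2)}{\varepsilon_k}\,\omega_{\varepsilon_k}\!\left(\chi_n(N_{\varepsilon_k})\chi_m(N_{\varepsilon_k})\,W(f+g)\,\chi_n(N_{\varepsilon_k})\chi_m(N_{\varepsilon_k})\right) + o(1),
\]
whose limit (first $\varepsilon_k\to 0$, then $n,m\to\infty$) is $\Imm\langle f,g\rangle\,\mu(F_{f+g})$; a short direct computation from the definition \eqref{poisson} identifies this quantity with $\mu(\{F_g,F_f\})$. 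For the left-hand side, a Duhamel expansion of $\alpha_{i\varepsilon\beta}(B_m)=e^{-\beta H_\varepsilon}B_m e^{\beta H_\varepsilon}$ in $\beta$ gives
\[
\frac{\alpha_{i\varepsilon\beta}(B_m)-B_m}{i\varepsilon}\;=\; i\beta\,\frac{[H_\varepsilon,B_m]}{\varepsilon}\,+\,\mathcal{R}_\varepsilon,
\]
with a quadratic-in-$\beta$ Duhamel remainder $\mathcal{R}_\varepsilon$ that vanishes in the $\varepsilon\to 0$ limit on the cut-off sectors. For the leading term, the phase-space semiclassical calculus of \cite{MR2465733}, together with the translation identity $W(g)^{-1}a_x W(g)=a_x+i\sqrt{\varepsilon/2}\,g_x$, identifies $[H_\varepsilon,W(g)]/(i\varepsilon)$ as a pseudodifferential-type object whose principal symbol is the Poisson bracket $\{h,F_g\}$, and one obtains
\[
\omega_{\varepsilon_k}\!\left(A_n\cdot i\beta\,\frac{[H_\varepsilon,B_m]}{\varepsilon}\right)\;\longrightarrow\;-\beta\,\mu(F_f\,\{h,F_g\}).
\]

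Combining both sides of \eqref{sec4.eq.main} and passing $n,m\to\infty$ yields the identity $\beta\,\mu(F_f\{h,F_g\})=\mu(\{F_f,F_g\})$ for every $f,g\in\ell^2(G)$, which is the statement \eqref{CKMS} for the particular choice $F=F_f,G=F_g$. To upgrade from Weyl symbols to arbitrary smooth $F,G$, I would argue that the $*$-algebra generated by $\{F_h\}_{h\in\ell^2(G)}$ separates points of the finite-dimensional phase space and is therefore dense on compact sets by Stone--Weierstrass; the polynomial moment bounds on $\mu$ inherited from Lemma \ref{particle_number} then permit, by approximation, the extension of the identity to every smooth $F,G$ of at most polynomial growth, which is the natural class for which both members of \eqref{CKMS} make sense.

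The main obstacle is the uniform-in-$\varepsilon$ control of the remainder $\mathcal{R}_\varepsilon$ and of the product $A_n\cdot[H_\varepsilon,B_m]/\varepsilon$: because $H_\varepsilon$ is unbounded and contains the quartic interaction $\varepsilon^2\lambda I_G/2$, naive operator-norm bounds on the iterated commutator $[H_\varepsilon,[H_\varepsilon,B_m]]$ lose powers of $\varepsilon^{-1}$, and direct estimates are insufficient. The argument closes only by combining the three ingredients in the correct order: the cut-offs $\chi_n(N_\varepsilon)$ (which bound the operator norms sector-wise), the moment bound from Lemma \ref{particle_number} (which justifies removing the cut-offs after the Wigner limit), and the phase-space symbolic calculus of \cite{MR2465733} (which identifies the commutator $[H_\varepsilon,W(g)]$ with its Poisson-bracket symbol at leading order in $\varepsilon$).
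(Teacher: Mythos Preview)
Your overall strategy matches the paper's: pass to the limit in \eqref{sec4.eq.main} using Wigner measures, identify the commutators with Poisson brackets via the Fock-space calculus of \cite{MR2465733}, then remove the cut-offs and extend from Weyl exponentials to general smooth symbols. Two points deserve correction.

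First, your formula for the right-hand side is imprecise. The commutator $[B_m,A_n]=[\chi_m W(g)\chi_m,\chi_n W(f)\chi_n]$ is \emph{not} $\chi_n\chi_m[W(g),W(f)]\chi_m\chi_n+o(1)$ as $\varepsilon\to 0$: the cross-commutators $[\chi_n,W(g)]$ and $[\chi_n,W(f)]$ contribute terms of order $\varepsilon$ as well, so after dividing by $i\varepsilon$ they survive the $\varepsilon\to 0$ limit. The paper carries these explicitly (the three integrals \eqref{sec4.eq.3}--\eqref{sec4.eq.5}); they involve $\chi_n'(\|u\|^2)$ and vanish only when one subsequently sends $n\to\infty$. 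Your ``$+o(1)$'' hides terms that are $O(1)$ in $\varepsilon$ and $o(1)$ only in $n$.

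Second, and more seriously, your Stone--Weierstrass argument for the final extension does not close. Density of the algebra generated by the exponentials $F_h$ in $C(K)$ gives only uniform approximation on compacts; but both sides of \eqref{CKMS} involve Poisson brackets, hence first derivatives of $F$ and $G$, and $C^0$-approximation does not yield convergence of $\{F,G\}$ or $\{h,G\}$. The paper avoids this by exploiting bilinearity of \eqref{CKMS} in $(F,G)$ and writing $F,G\in\mathscr{C}_0^\infty(\ell^2(G))$ via the inverse Fourier transform as superpositions of the $F_f$; differentiation under the integral sign then transports the Poisson brackets directly, without any approximation in derivative norms. Your route can be repaired (trigonometric polynomials are also dense in $C^1$ on compacts), but as written it is a genuine gap.
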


In order to prove Theorem \ref{KMSclassic}, one needs some preliminary steps. 

\begin{proposition}
\label{sec4.prop.2}
Let $(\varepsilon_k)_{k\in\N}$ be a subsequence such that $\lim_{k\to \infty}\varepsilon_k= 0$. Assume that the  family of KMS states $\{\omega_{\varepsilon_k}\}_{k\in\N}$ admits a unique Wigner measure $\mu$. Then for all $n,m$  integers such that $m\geq 2 n$,
\begin{eqnarray}
\label{sec4.eq.3}
\lim_{k\to\infty} \omega_{\varepsilon_k}\left( \frac{[B_m,A_n]}{i\varepsilon_k} \right)&=&\int_{\ell^2(G)}  
\chi_n^2(\langle u, u\rangle) \; \{ e^{\sqrt{2}i  \Ree\langle g,u\rangle}; e^{\sqrt{2}i  \Ree\langle f,u\rangle}\}
 \;d\mu\,\\ \label{sec4.eq.4}
 &&+  
\int_{\ell^2(G)}  
\chi_n(\langle u, u\rangle) \; \{ e^{\sqrt{2}i  \Ree\langle g,u\rangle}; \chi_n(\langle u, u\rangle) \} \; e^{\sqrt{2}i  \Ree\langle f,u\rangle}
 \;d\mu\\
 \label{sec4.eq.5}
  && + \int_{\ell^2(G)}  
\chi_n(\langle u, u\rangle) \; \{ \chi_n(\langle u, u\rangle); e^{\sqrt{2}i  \Ree\langle f,u\rangle}\}\;e^{\sqrt{2}i  \Ree\langle g,u\rangle}
 \;d\mu\,.
\end{eqnarray} 
\end{proposition}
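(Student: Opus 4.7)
The plan is to expand the commutator $[B_m,A_n]$ by the Leibniz rule and then pass to the classical limit term by term using Ammari--Nier's semiclassical calculus in the Fock space \cite{MR2465733}.

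I would first exploit the compatibility of the cut-offs. Since $\chi\equiv 1$ on $[-1/2,1/2]$, the hypothesis $m\geq 2n$ gives $\chi_m(N_\varepsilon)\chi_n(N_\varepsilon)=\chi_n(N_\varepsilon)$ at the operator level, hence $\chi_m(N_\varepsilon)A_n=A_n\chi_m(N_\varepsilon)=A_n$ and $[\chi_m(N_\varepsilon),A_n]=0$. This reduces $[B_m,A_n]=\chi_m(N_\varepsilon)\,[W(g),A_n]\,\chi_m(N_\varepsilon)$, and a Leibniz expansion then splits the inner commutator into three pieces:
\[
[W(g),A_n]=[W(g),\chi_n(N_\varepsilon)]\,W(f)\,\chi_n(N_\varepsilon)+\chi_n(N_\varepsilon)\,[W(g),W(f)]\,\chi_n(N_\varepsilon)+\chi_n(N_\varepsilon)\,W(f)\,[W(g),\chi_n(N_\varepsilon)].
\]

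Next I would identify the classical symbol of each divided commutator. The canonical Weyl relation $W(g)W(f)=e^{-i\varepsilon\Imm\langle g,f\rangle/2}W(g+f)$ yields $[W(g),W(f)]/(i\varepsilon)=-(2/\varepsilon)\sin(\varepsilon\Imm\langle g,f\rangle/2)\,W(g+f)$, with classical symbol limit $\{G,F\}$ for $F=e^{i\sqrt{2}\Ree\langle f,u\rangle}$ and $G=e^{i\sqrt{2}\Ree\langle g,u\rangle}$. For the cut-off commutator I would use the displacement identity $W(g)N_\varepsilon W(g)^{*}=N_\varepsilon+\varepsilon^{3/2}\Pi(g)+\varepsilon^{2}\|g\|^{2}/2$, with $\Pi(g)=(a^{*}(g)-a(g))/(i\sqrt{2})$, combined with a Taylor / Duhamel expansion of $\chi_n$ around $N_\varepsilon$, to obtain $[W(g),\chi_n(N_\varepsilon)]/(i\varepsilon)=\chi_n'(N_\varepsilon)\,\varepsilon^{1/2}\Pi(g)\,W(g)/i+o(1)$; since $\varepsilon^{1/2}\Pi(g)$ has classical symbol $-\sqrt{2}\,\Imm\langle g,u\rangle$, the right-hand side has classical symbol $\{G,\chi_n(\langle u,u\rangle)\}$.

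Then I would invoke the semiclassical convergence of \cite[Thm.~6.2]{MR2465733} for bounded products of Weyl operators and smooth functions of $N_\varepsilon$: under the assumption that $(\omega_{\varepsilon_k})_k$ admits the unique Wigner measure $\mu$, the expectation of each of the three operator products above converges, as $k\to\infty$, to the integral against $\mu$ of its classical symbol. Because $m\geq 2n$ forces $\chi_m$ to equal one on the supports of both $\chi_n$ and $\chi_n'$, the outer $\chi_m$'s drop out in the limit. The middle commutator then yields the term \eqref{sec4.eq.3}, while the two $[W(g),\chi_n]$-type commutators sitting on either side of $W(f)$ in the sandwich $A_n=\chi_n(N_\varepsilon)W(f)\chi_n(N_\varepsilon)$ produce the asymmetric pair \eqref{sec4.eq.4} and \eqref{sec4.eq.5}.

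The main obstacle is the rigorous control of the cut-off commutator. The field $\Pi(g)$ is unbounded, but $\chi_n'(N_\varepsilon)$ confines its action to the subspace $\{N_\varepsilon\le n\}$, on which $\varepsilon^{1/2}\Pi(g)$ enjoys uniform $\varepsilon$-independent norm estimates thanks to the finiteness of the graph. Ensuring that the $o(1)$ remainder truly vanishes after composition with $W(f)$ and $\chi_n(N_\varepsilon)$, and that the Ammari--Nier framework allows multiplying the classical symbols in the $\varepsilon_k\to 0$ limit, is the delicate technical step of the argument.
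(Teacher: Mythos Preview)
Your approach is essentially the paper's: same three-term Leibniz decomposition, same Weyl-relation treatment of $[W(g),W(f)]$, and the same semiclassical identification via \cite{MR2465733}. The one substantive variation is how you handle $[W(g),\chi_n(N_\varepsilon)]$. The paper avoids the non-commuting Taylor expansion entirely by writing $\chi_n(N_\varepsilon)=\int_{\R}\hat\chi_n(s)\,e^{isN_\varepsilon}\,ds/\sqrt{2\pi}$ and computing $[W(g),e^{isN_\varepsilon}]$ via a one-line Duhamel formula; this produces an exact expression whose integrand is uniformly bounded in $s$ and $\varepsilon$, so the limit follows directly from Lemma~\ref{apx.lem-est2} and \cite[Thm.~6.13]{MR2465733} (not Thm.~6.2, which only gives existence of Wigner measures). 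Your displacement--Taylor route can be made rigorous (e.g.\ via Helffer--Sj\"ostrand), but the Fourier representation sidesteps exactly the obstacle you flag in your last paragraph.

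One small caution: your two ``outer'' terms both carry $[W(g),\chi_n]$, so a naive symbol calculation gives $\int\chi_n\{G,\chi_n\}F\,d\mu$ and $\int\chi_n F\{G,\chi_n\}\,d\mu$, whereas \eqref{sec4.eq.5} involves $\{\chi_n,F\}$ with $f$, not $g$. The paper gets \eqref{sec4.eq.5} because it first uses cyclicity of the trace (and $[\rho_\varepsilon,N_\varepsilon]=0$) to rewrite $\omega_\varepsilon([B_m,A_n])=\omega_\varepsilon\big(\chi_n(B\chi_n A-A\chi_n B)\big)$, which makes the third term $\omega_\varepsilon(\chi_n[\chi_n,A]B)$ rather than $\omega_\varepsilon(\chi_n A[B,\chi_n])$. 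These two expectations in fact coincide for every $\varepsilon$ by the same cyclicity, so your outline still proves the proposition---but you should insert that cyclicity step explicitly to land on the stated form \eqref{sec4.eq.5}.
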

\begin{proof}
For simplicity, we denote $\varepsilon$ instead of  $\varepsilon_k$ and  $\chi_m$ instead of $\chi_m(N_{\varepsilon})$. Using the cyclicity of the trace and the fact that $\chi_n\chi_m=\chi_n$, one remarks   that
$$
\omega_{\varepsilon}\left( [B_m,A_n] \right)= \omega_{\varepsilon} \left(\chi_n \, (B \chi_n A- A\chi_n B) \right)\,.
$$
A simple computation yields,
\begin{equation}
\label{sec4.eq.6}
\begin{aligned}
\lim_{\varepsilon\to 0} \omega_{\varepsilon}\left(  \frac{[B_m,A_n]}{i\varepsilon} \right)&=  \lim_{\varepsilon\to 0}\omega_{\varepsilon} \left(\chi_n^2 \,  \frac{[B,A]}{i\varepsilon}  \right)\\
&+  \lim_{\varepsilon\to 0}\omega_{\varepsilon} \left(\chi_n \,  \frac{[B,\chi_n]}{i\varepsilon}  \,A\right)\\
&+  \lim_{\varepsilon\to 0}\omega_{\varepsilon} \left(\chi_n \,  \frac{[\chi_n,A]}{i\varepsilon}  \, B\right)\,.
\end{aligned}
\end{equation}
The Weyl commutation relations give,
$$
\frac{[B,A]}{i\varepsilon}=W(f+g) \; (\Imm\langle f,g\rangle+O(\varepsilon))\,.
$$
So, using Lemma \ref{apx.lem-est2}, 
\begin{eqnarray*}
 \lim_{\varepsilon\to 0}\omega_{\varepsilon} \left(\chi_n^2 \,  \frac{[B,A]}{i\varepsilon}  \right)&=&
\Imm\langle f,g\rangle  \;\lim_{\varepsilon\to 0}\omega_{\varepsilon} \left(\chi_n^2 \,  W(f+g) \right)\\
&=&
\Imm\langle f,g\rangle  \int_{\ell^2(G)}  
\chi_n^2(\langle u, u\rangle) \; e^{\sqrt{2}i  \Ree\langle f+g,u\rangle}  \;d\mu\,.
\end{eqnarray*}
Checking the Poisson bracket,
$$
 \{ e^{\sqrt{2}i  \Ree\langle g,u\rangle}; e^{\sqrt{2}i  \Ree\langle f,u\rangle}\}= \Imm\langle f, g\rangle \; e^{\sqrt{2}i  \Ree\langle f+g,u\rangle}\,,
$$
one obtains the right hand side of \eqref{sec4.eq.3}.   Consider now the second term in \eqref{sec4.eq.6}. One can write 
$$
[W(g), \chi_n]=\int_{\R} \hat\chi_n(s) \, [W(g), e^{is N_\varepsilon}] \, \frac{ds}{\sqrt{2\pi}}\,,
$$
where $\hat\chi_n$ denotes the Fourier transform of the function $\chi_n(\cdot)$. Using standard computations in the 
Fock space and Taylor expansion, 
\begin{eqnarray*}
[W(g), e^{is N_\varepsilon}]&=& e^{is N_\varepsilon} \left( e^{-is N_\varepsilon} W(g) e^{is N_\varepsilon} -W(g)\right)\\
&=& i e^{is N_\varepsilon}  \;\int_0^s e^{-ir N_\varepsilon} [W(g), N_\varepsilon] \, e^{ir N_\varepsilon} \, dr\\
&=& - e^{is N_\varepsilon}  \;\int_0^s e^{-ir N_\varepsilon} W(g) \left(\varepsilon \Phi(ig)+ \frac{\varepsilon^2}{2} \|g\|^2\right) \, e^{ir N_\varepsilon} \, dr \,.
\end{eqnarray*}
Hence, using the cyclicity of the trace
\begin{equation}
 \lim_{\varepsilon\to 0}\omega_{\varepsilon} \left(\chi_n \,  \frac{[B,\chi_n]}{i\varepsilon}  \,A\right)= 
 -\int_{\R}  s \hat\chi_n(s) \, \lim_{\varepsilon\to 0}\omega_{\varepsilon} \left(\chi_n \, e^{is N_\varepsilon} W(g) \Phi(ig) W(f)\right) \; \frac{ds}{\sqrt{2\pi}}\,.
\end{equation}
Knowing, by Lemma \ref{apx.lem-est2}, that the Wigner measure of  the sequence  $\{W(f) \rho_\varepsilon \chi_n(N_\varepsilon) e^{is N_\varepsilon} W(g)\}$ is given by 
$$
\big\{\mu \chi_n(\langle u, u\rangle) e^{i s \|u\|^2} e^{\sqrt{2} i \Ree\langle g+f,u\rangle}  \big\}\,,
$$
then one obtains  using \cite[Thm.~6.13]{MR2465733}, 
$$
\lim_{\varepsilon\to 0}\omega_{\varepsilon} \left(\chi_n \,  \frac{[B,\chi_n]}{i\varepsilon}  \,A\right)=  
-\sqrt{2}  \int_\R s \hat\chi_n(s) \, \int_{\ell^2(G)}  \chi_n(\langle u,u\rangle) 
 \, e^{is \|u\|^2}  \Ree\langle u, i g\rangle e^{\sqrt{2} i \Ree\langle g+f,u\rangle} d\mu \; \frac{ds}{\sqrt{2\pi}}\,.
$$
Integrating back with respect to the variable $s$, 
\begin{eqnarray*}
\lim_{\varepsilon\to 0}\omega_{\varepsilon} \left(\chi_n \,  \frac{[B,\chi_n]}{i\varepsilon}  \,A\right)= 
\sqrt{2}i \int_{\ell^2(G)} \chi_n'(\|u\|^2) \,  \chi_n(\|u\|^2)   \,  \Imm\langle g,u\rangle 
e^{\sqrt{2} i \Ree\langle g+f,u\rangle}  \;d\mu\,.
\end{eqnarray*}
Then checking the Poisson bracket 
$$
 \big\{ e^{\sqrt{2}i  \Ree\langle g,u\rangle}; \chi_n(\langle u, u\rangle) \big\}=
 \sqrt{2}i \chi_n'(\|u\|^2) \;\Imm\langle g, u\rangle e^{\sqrt{2}i  \Ree\langle g,u\rangle}\,,
$$
yields the right hand side of \eqref{sec4.eq.4}. The third term in the right side of \eqref{sec4.eq.6} is similar to the above one. 
\end{proof}

The next step is to prove the convergence of the left hand side of \eqref{sec4.eq.main}. 

\begin{lemma}
\label{lem.tech.1}
\begin{eqnarray}
\label{sec.eq.4}
\lim_{k\to \infty} \omega_{\varepsilon_k}\left(A_n \;  \frac{\alpha_{i\varepsilon_k\beta}(B_m)-B_m}{i\varepsilon_k}\right)=
\beta\, \lim_{k\to \infty} \omega_{\varepsilon_k}\left( A_n \;  \frac{ [B_m, H_{\varepsilon_k}]}{i\varepsilon_k}\right)\,.
\end{eqnarray}
\end{lemma}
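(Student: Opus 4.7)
The plan is to use a second-order Taylor expansion of the complex-time dynamics and then control the remainder. Since $B_m$ is an entire analytic element by Lemma \ref{lem.smoothing}, the map $z\mapsto\alpha_z(B_m)$ is entire with derivatives $\alpha_z(S^k(B_m))$, where $S(X)=\tfrac{i}{\varepsilon}[H_\varepsilon,X]=\tfrac{[X,H_\varepsilon]}{i\varepsilon}$. The Taylor formula with integral remainder gives
\begin{equation*}
\alpha_{i\varepsilon\beta}(B_m)-B_m-i\varepsilon\beta\,S(B_m)
=(i\varepsilon\beta)^{2}\int_{0}^{1}(1-t)\,\alpha_{it\varepsilon\beta}\bigl(S^{2}(B_m)\bigr)\,dt\,,
\end{equation*}
so that, dividing by $i\varepsilon$ and using $S(B_m)=[B_m,H_\varepsilon]/(i\varepsilon)$, the identity \eqref{sec.eq.4} reduces to
\begin{equation*}
\lim_{k\to\infty}\;i\varepsilon_k\beta^{2}\!\int_{0}^{1}\!(1-t)\,\omega_{\varepsilon_k}\!\bigl(A_n\,\alpha_{it\varepsilon_k\beta}(S^{2}(B_m))\bigr)dt=0\,.
\end{equation*}
Thus it is enough to prove that $\omega_{\varepsilon_k}(A_n\,\alpha_{it\varepsilon_k\beta}(S^{2}(B_m)))$ is bounded uniformly in $t\in[0,1]$ and in $k$.

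The heart of the matter is a double-commutator estimate showing $\|S^{2}(B_m)\|=O(1)$ uniformly in $\varepsilon$. Since $H_\varepsilon$ commutes with $\chi_m(N_\varepsilon)$, one computes
\begin{equation*}
[H_\varepsilon,[H_\varepsilon,B_m]]=\chi_m(N_\varepsilon)\,W(g)\bigl(D_\varepsilon^{2}+[H_\varepsilon,D_\varepsilon]\bigr)\chi_m(N_\varepsilon),\qquad D_\varepsilon:=W(g)^{*}H_\varepsilon W(g)-H_\varepsilon\,.
\end{equation*}
The Weyl conjugation rule $W(g)^{*}a_x W(g)=a_x+c_x$ with $|c_x|=O(\sqrt\varepsilon)$ expresses $D_\varepsilon$ as a polynomial in $a$, $a^{*}$ of bounded degree whose norm on the spectral subspace $\{N_\varepsilon\le m\}$ is $O(\varepsilon)$: each dangerous factor $a_x^{\#}|_{\{N_\varepsilon\le m\}}$ of size $(m/\varepsilon)^{1/2}$ is accompanied by at least one shift $c_y$ of size $\sqrt\varepsilon$ together with the prefactor $\varepsilon$ (respectively $\varepsilon^2$) from the kinetic (respectively interaction) part of $H_\varepsilon$. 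Repeating the same arithmetic once more yields $\|D_\varepsilon^{2}\chi_m(N_\varepsilon)\|$ and $\|[H_\varepsilon,D_\varepsilon]\chi_m(N_\varepsilon)\|$ both of order $\varepsilon^{2}$, and hence $\|[H_\varepsilon,[H_\varepsilon,B_m]]\|=O(\varepsilon^{2})$, giving $\|S^{2}(B_m)\|=O(1)$.

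The sandwich structure also gives $S^{2}(B_m)=\chi_m(N_\varepsilon)\,S^{2}(B_m)\,\chi_m(N_\varepsilon)$, so $S^{2}(B_m)$ is localized in the spectral subspace $\{N_\varepsilon\le m\}$ on which $H_\varepsilon\ge 0$ has norm at most some $\varepsilon$-independent constant $C(m)$ (because of the $\varepsilon$-scalings in front of $\mathrm d\Gamma(-\Delta_G-\kappa)$ and $I_G$). Since $e^{\pm t\beta H_\varepsilon}$ commutes with $\chi_m(N_\varepsilon)$,
\begin{equation*}
\alpha_{it\varepsilon\beta}(S^{2}(B_m))=\chi_m(N_\varepsilon)\,e^{-t\beta H_\varepsilon}S^{2}(B_m)\,e^{t\beta H_\varepsilon}\,\chi_m(N_\varepsilon)
\end{equation*}
has norm at most $e^{\beta C(m)}\|S^{2}(B_m)\|$, uniformly in $t\in[0,1]$ and $\varepsilon$. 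Multiplying the integrand by $\varepsilon_k\beta^{2}\|A_n\|$ and letting $k\to\infty$ concludes.

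The main technical obstacle is the double-commutator estimate $\|[H_\varepsilon,[H_\varepsilon,B_m]]\|=O(\varepsilon^{2})$: the quartic interaction $\varepsilon^{2}I_G$ produces polynomials of degree up to six in $a,a^{*}$ whose norms on $\{N_\varepsilon\le m\}$ grow like $(m/\varepsilon)^{3}$, and one must verify carefully that the accompanying powers of $\sqrt\varepsilon$ from the Weyl shifts and of $\varepsilon$ from the Hamiltonian prefactors always combine to beat this growth and deliver the full two powers of $\varepsilon$ needed, rather than merely one.
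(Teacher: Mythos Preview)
Your argument is correct and runs parallel to the paper's, but is organized differently. The paper proceeds in two \emph{first-order} Taylor steps: it first writes
\[
\frac{\alpha_{i\varepsilon\beta}(B_m)-B_m}{i\varepsilon}=\beta\int_0^1\alpha_{is\varepsilon\beta}\bigl(S(B_m)\bigr)\,ds,
\]
then uses cyclicity of the Gibbs trace to shift the complex-time conjugation from $B_m$ onto $A_n$, namely $\omega_\varepsilon\bigl(A_n\,\alpha_{is\varepsilon\beta}(S(B_m))\bigr)=\omega_\varepsilon\bigl(\alpha_{-is\varepsilon\beta}(A_n)\,S(B_m)\bigr)$, and finally Taylor-expands $\alpha_{-is\varepsilon\beta}(A_n)$ once more. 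The remainder then involves only the \emph{single} commutators $[H_\varepsilon,A_n]$ and $[B_m,H_\varepsilon]$, each of order $\varepsilon$ by Lemma~\ref{apx.lem-est1}, so the product is $O(\varepsilon)$ after the $1/(i\varepsilon)$ has been absorbed.

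Your route expands $\alpha_{i\varepsilon\beta}(B_m)$ directly to second order and places the entire burden on the double commutator $[H_\varepsilon,[H_\varepsilon,B_m]]=O(\varepsilon^2)$. This is a genuine alternative: it dispenses with the cyclicity trick and never touches $A_n$, at the price of the stronger estimate you sketch---which is indeed a routine iteration of the proof of Lemma~\ref{apx.lem-est1} (your power-counting is correct; the degree-$3$ piece of $D_\varepsilon$ carries the prefactor $\varepsilon^{5/2}$, so commuting it with the quartic $\varepsilon^2 I_G$ produces a degree-$5$ Wick monomial with prefactor $\varepsilon^{9/2}$, which on $\{N_\varepsilon\le m\}$ gives exactly $\varepsilon^2$). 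Both approaches control the complex-time evolution the same way, through the uniform bound $\|H_\varepsilon\,1_{[0,m]}(N_\varepsilon)\|\le C(m)$. The paper's version is a little more economical because it reuses the single-commutator lemma already needed elsewhere; yours is more direct and keeps the analysis entirely on the $B_m$ side.
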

\begin{proof}
For simplicity, we use $\varepsilon$ instead of $\varepsilon_k$. According to Lemma \ref{lem.smoothing}, $B_m$ is a entire analytic element for the dynamics $\alpha_t$. Hence, by Taylor expansion,
\begin{eqnarray*}
 \omega_{\varepsilon}\left(A_n \;  \frac{\alpha_{i\varepsilon\beta}(B_m)-B_m}{i\varepsilon}\right)=\beta\,
\int^1_0 \omega_{\varepsilon} \left(A_n \frac{[\alpha_{is\varepsilon\beta} (B_m), H_\varepsilon]}{i\varepsilon}\right) \; ds\,. 
\end{eqnarray*}
Using the cyclicity of the trace and the fact that $A_n$, $B_m$ are entire analytic elements, 
\begin{equation*}
\omega_{\varepsilon} \left(A_n \frac{[\alpha_{is\varepsilon\beta} (B_m), H_\varepsilon]}{i\varepsilon}\right) =\omega_{\varepsilon} \left(e^{s \beta H_\varepsilon}  \,A_n\,e^{-s \beta H_\varepsilon}   \;\frac{[B_m, H_\varepsilon]}{i\varepsilon}\right) \,.
\end{equation*}
A second Taylor expansion yields,
\begin{eqnarray*}
\omega_{\varepsilon} \left(A_n \,\frac{[\alpha_{is\varepsilon\beta} (B_m), H_\varepsilon]}{i\varepsilon}\right) &=&\omega_{\varepsilon} \left(  A_n  \;\frac{[B_m, H_\varepsilon]}{i\varepsilon}\right) \\ &&+\beta
\int_0^s \omega_\varepsilon\left( e^{r \beta H_\varepsilon}  \,[H_\varepsilon,A_n]\,e^{-r \beta H_\varepsilon}   \;\frac{[B_m, H_\varepsilon]}{i\varepsilon}\right) \, dr\,.
\end{eqnarray*}
So, the equality \eqref{sec.eq.4} is proved since 
$$
\lim_{\varepsilon\to 0} \int_0^1 ds \int_0^s dr \;\; \omega_\varepsilon\left(  \,[H_\varepsilon, \alpha_{-is\varepsilon\beta }(A_n)]  \;\frac{[B_m, H_\varepsilon]}{i\varepsilon}\right)\,=0\,,
$$
thanks to the Lemma \ref{apx.lem-est1} in the Appendix.  
\end{proof}

\begin{proposition}
\label{sec4.prop.1}
Let $(\varepsilon_k)_{k\in\N}$ be a subsequence such that $\lim_{k\to \infty}\varepsilon_k= 0$. Assume that the  family of KMS states $\{\omega_{\varepsilon_k}\}_{k\in\N}$ admits a unique Wigner measure $\mu$. Then for all $n,m$  integers such that $ m\geq 2 n$,
\begin{eqnarray}
\label{sec4.eq.2}
\lim_{k\to\infty} \omega_{\varepsilon_k}\left( A_n \, \frac{\alpha_{i\varepsilon\beta}(B_m)-B_m}{i\varepsilon}\right)=\beta\int_{\ell^2(G)}  
\chi_n^2(\langle u, u\rangle) \; \{ e^{\sqrt{2}i \Ree\langle g,u\rangle}; h(u)\} \,e^{\sqrt{2}i \Ree\langle f,u\rangle}
 \;d\mu\,.
\end{eqnarray} 
\end{proposition}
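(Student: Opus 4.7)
The plan is to start from Lemma \ref{lem.tech.1}, which reduces the left-hand side of \eqref{sec4.eq.2} to $\beta\,\lim_{k\to\infty} \omega_{\varepsilon_k}( A_n\,[B_m,H_{\varepsilon_k}]/(i\varepsilon_k))$. Since $N_\varepsilon$ commutes strongly with $H_\varepsilon$, the cutoff $\chi_m(N_\varepsilon)$ commutes with both $H_\varepsilon$ and the Gibbs density $e^{-\beta H_\varepsilon}$, so $[B_m,H_\varepsilon]=\chi_m(N_\varepsilon)\,[W(g),H_\varepsilon]\,\chi_m(N_\varepsilon)$; using the cyclicity of the trace these flanking cutoffs may be transferred onto $A_n$. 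The functional-calculus identity $\chi_m\chi_n=\chi_n$ (valid because $\chi_n$ vanishes outside $[-n,n]\subset[-m/2,m/2]$, where $\chi_m\equiv 1$, guaranteed by $m\ge 2n$) then gives $\chi_m(N_\varepsilon)\,A_n\,\chi_m(N_\varepsilon)=A_n$, and the task reduces to showing
\begin{equation*}
\lim_{k\to\infty}\omega_{\varepsilon_k}\!\left( A_n\,\frac{[W(g),H_{\varepsilon_k}]}{i\varepsilon_k}\right)
=\int_{\ell^2(G)}\chi_n^2(\|u\|^2)\,e^{i\sqrt{2}\Ree\langle f,u\rangle}\,\{e^{i\sqrt{2}\Ree\langle g,u\rangle},h(u)\}\,d\mu.
\end{equation*}

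Next, I would split $H_\varepsilon=\varepsilon\,\mathrm{d}\Gamma(T)+\tfrac{\varepsilon^2\lambda}{2}I_G$, with $T:=-\Delta_G-\kappa\mathbf 1$, and treat the two commutators separately. For the quadratic part, the CCR produce the c-number identities $[a_y,W(g)]=i\sqrt{\varepsilon/2}\,g(y)\,W(g)$ and $[a_x^*,W(g)]=-i\sqrt{\varepsilon/2}\,\overline{g(x)}\,W(g)$; summing them against $T_{xy}$ and using the self-adjointness of $T$ one arrives at the closed-form identity
\begin{equation*}
\frac{[W(g),\varepsilon\,\mathrm{d}\Gamma(T)]}{i\varepsilon}
= -\sqrt{\tfrac{\varepsilon}{2}}\,\bigl(a^*(Tg)\,W(g)-W(g)\,a(Tg)\bigr),
\end{equation*}
whose Wick symbol is, to leading order, $\{e^{i\sqrt{2}\Ree\langle g,u\rangle},\langle u,Tu\rangle\}$. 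An analogous but longer CCR expansion represents $\tfrac{\varepsilon\lambda}{2i}[W(g),I_G]$ as $W(g)$ times a polynomial in $\sqrt{\varepsilon}\,a,\sqrt{\varepsilon}\,a^*$ of total degree three, whose principal Wick symbol equals $\{e^{i\sqrt{2}\Ree\langle g,u\rangle},\tfrac{\lambda}{2}\sum_x|u(x)|^4\}$ with remainders of size $O(\varepsilon)$. Summing the two contributions, the principal Wick symbol of $[W(g),H_\varepsilon]/(i\varepsilon)$ is $\{e^{i\sqrt{2}\Ree\langle g,u\rangle},h(u)\}$. Inserting this expansion into $\omega_{\varepsilon_k}(A_n\,\cdot\,)$ and invoking the semiclassical convergence theorem \cite[Thm.~6.13]{MR2465733}---in the same spirit as in the proof of Proposition \ref{sec4.prop.2} and with the help of Lemma \ref{apx.lem-est2}---the two $\chi_n(N_\varepsilon)$ factors of $A_n$ yield the weight $\chi_n^2(\|u\|^2)$, $W(f)$ produces $e^{i\sqrt{2}\Ree\langle f,u\rangle}$, and the asserted identity follows.

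The main obstacle is the uniform control of the quartic contribution $\tfrac{\varepsilon\lambda}{2i}[W(g),I_G]$. Unlike $W(f)$ or $W(g)$, this contains unbounded Wick-ordered monomials of degree three in $a,a^*$, and one must verify that the subleading terms of the CCR expansion really contribute $O(\varepsilon)$ once averaged against $\omega_{\varepsilon_k}$. This is where the cutoff $\chi_n(N_\varepsilon)$ in $A_n$ becomes essential: combined with the uniform moment estimates on $N_\varepsilon$ provided by Lemma \ref{particle_number}, it localises the expectation to a sector of bounded particle number, on which the polynomial observables are under control and the passage to the limit through the Wick calculus of \cite{MR2465733} becomes routine.
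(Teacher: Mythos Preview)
Your proposal is correct and follows essentially the same route as the paper: reduce via Lemma~\ref{lem.tech.1}, use the commutation of $\chi_m(N_\varepsilon)$ with $H_\varepsilon$ together with $\chi_m\chi_n=\chi_n$ to strip the outer cutoffs, expand $[W(g),H_\varepsilon]/(i\varepsilon)$ as a Wick-quantized polynomial times $W(g)$ with an $O(\varepsilon)$ remainder controlled by the $\chi_n$-cutoff, and conclude with Lemma~\ref{apx.lem-est2} and \cite[Thm.~6.13]{MR2465733}. The only cosmetic difference is that the paper obtains the commutator in one stroke via the conjugation identity $W(g)H_\varepsilon W(g)^* = \bigl(h(\cdot-\tfrac{i\varepsilon}{\sqrt 2}g)\bigr)^{\mathrm{Wick}}$ from \cite[Prop.~2.10]{MR2465733} and then Taylor-expands, whereas you split $H_\varepsilon$ into its quadratic and quartic parts and apply the CCR directly; both computations produce the same principal symbol $\{\sqrt 2\,\Ree\langle g,u\rangle,h(u)\}$ (equivalently, $\{e^{i\sqrt 2\,\Ree\langle g,u\rangle},h(u)\}$ once the Weyl factor is reabsorbed) and the same $O(\varepsilon)$ remainder bounded after multiplication by $\chi_n(N_\varepsilon)$.
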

\begin{proof}
The previous Lemma \ref{lem.tech.1} allowed to get rid of the dynamics at complex times. So, it is enough to show the limit, 
$$
\lim_{k\to\infty}\omega_{\varepsilon_k}\left( A_n \;  \frac{ [B_m, H_{\varepsilon_k}]}{i\varepsilon_k}\right)=
\int_{\ell^2(G)}  
\chi_n^2(\langle u, u\rangle) \; \{ e^{\sqrt{2}i \Ree\langle g,u\rangle}; h(u)\} \,e^{\sqrt{2}i \Ree\langle f,u\rangle}
 \;d\mu\,.
$$
For simplicity, we denote $\varepsilon$ instead of $\varepsilon_k$ and  $\chi_m$ instead of $\chi_m(N_{\varepsilon})$. Since $m\geq 2n$ then $\chi_n\chi_m=\chi_n$ and one notices that  
$$
\chi_n  A \chi_n \; [\chi_m B \chi_m , H_\varepsilon] = \chi_n A \chi_n \;[B,H_\varepsilon]\,\chi_m= \chi_n A \chi_n \;[W(g),H_\varepsilon]\,\chi_m\,. 
$$
Standard computations on the Fock space yield, (see e.g.~\cite[Proposition 2.10]{MR2465733}),
 \begin{eqnarray*}
 \frac{i}{\varepsilon}  [B,H_\varepsilon]&=& \frac{i}{\varepsilon}  \left( W(g) H_\varepsilon W(g)^*-H \right) \, W(g)\\
 &=& \frac{i}{\varepsilon} \left( h(\cdot-\frac{i\varepsilon}{\sqrt{2}} g)-h(u)\right)^{Wick} \;W(g) \\
 &=& \left(\underset{C^{Wick}}{\underbrace{\{\sqrt{2}\Ree\langle g, u\rangle, h(u)\}}}^{Wick}+ R(\varepsilon)^{Wick}\right)\; W(g)\,.
 \end{eqnarray*}
The subscript \emph{Wick} refers to the Wick quantization, see \cite[section 2]{MR2465733}.  The remainder $R(\varepsilon)^{Wick}$ can be explicitly computed and satisfies the uniform estimate  
$$
\|\chi_n(N_\varepsilon) \; R(\varepsilon)^{Wick}\|\leq c \;\varepsilon\,,
$$
which can be easily proved using \cite[Lemma 2.5]{MR2465733}.  Therefore, by using Lemma \ref{apx.lem-est1} one shows
\begin{eqnarray*}
\lim_{k\to\infty} \omega_{\varepsilon_k}\left( A_n \, \frac{\alpha_{i\varepsilon\beta}(B_m)-B_m}{i\varepsilon}\right)&=&\beta \lim_{k\to\infty} \omega_{\varepsilon_k}\left( \chi_n A \chi_n  \,  C^{Wick} B \right)\\
&=&\beta \lim_{k\to\infty} \omega_{\varepsilon_k}\left( \chi_n^2 A \,  C^{Wick} B\right)\,.
\end{eqnarray*}
Knowing, by Lemma \ref{apx.lem-est2}, that the Wigner measure  of the sequence  $\{ W(g) \rho_\varepsilon \chi^2_n(N_\varepsilon) W(f)\}$ is given by 
$$
\big\{\mu e^{\sqrt{2}i \Ree\langle f+g,u\rangle} \chi_n^2(\|u\|^2)\big\}\,,
$$
one concludes by  \cite[Thm.~6.13]{MR2465733}, 
$$
\lim_{\varepsilon\to 0} \omega_{\varepsilon}\left( A_n \, \frac{\alpha_{i\varepsilon\beta}(B_m)-B_m}{i\varepsilon}\right)= \int_{\ell^2(G)}   
\chi_n^2(\|u\|^2) \; e^{\sqrt{2}i \Ree\langle f+g,u\rangle} C(u) \;d\mu\,.
$$ 
\end{proof}

\begin{cor}
\label{sec4.cor.1}
Any Wigner measure of the $(\alpha_t,\varepsilon\beta)$-KMS family of states $\omega_{\varepsilon}$ satisfies for all $f,g\in\ell^2(G)$,
\begin{eqnarray}
\label{sec4.eq.7}
\beta\; \int_{\ell^2(G)}  
 \{ e^{i \Ree\langle g,u\rangle}; h(u)\} \,e^{i \Ree\langle f,u\rangle}
 \;d\mu\,
=\int_{\ell^2(G)}   \; \{ e^{i \Ree\langle g,u\rangle}; e^{i \Ree\langle f,u\rangle}\}
 \;d\mu\,.
 \end{eqnarray} 
\end{cor}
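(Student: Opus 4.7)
The plan is to combine the KMS identity \eqref{sec4.eq.main} with the two convergence statements of Propositions \ref{sec4.prop.2} and \ref{sec4.prop.1}, and then to remove the cutoffs $\chi_n$ by letting $n\to\infty$.

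More precisely, let $\mu$ be any Wigner measure of the family $\{\omega_\varepsilon\}_{\varepsilon\in(0,\bar\varepsilon)}$. By definition there is a subsequence $(\varepsilon_k)_{k\in\N}$ tending to zero along which $\mu$ is the unique Wigner measure, so both propositions apply. Fixing integers $m\geq 2n$ and vectors $f,g\in\ell^2(G)$, I would write \eqref{sec4.eq.main} for the smoothed observables $A_n$ and $B_m$ and pass to the limit $k\to\infty$ on both sides. Proposition \ref{sec4.prop.1} identifies the limit of the left-hand side and Proposition \ref{sec4.prop.2} the right-hand side, yielding
$$
\beta\int_{\ell^2(G)} \chi_n^2(\|u\|^2)\,\{e^{\sqrt{2}i\Ree\langle g,u\rangle},h(u)\}\,e^{\sqrt{2}i\Ree\langle f,u\rangle}\,d\mu \;=\; \int_{\ell^2(G)} \chi_n^2(\|u\|^2)\,\{e^{\sqrt{2}i\Ree\langle g,u\rangle},e^{\sqrt{2}i\Ree\langle f,u\rangle}\}\,d\mu \;+\; R_n,
$$
where $R_n$ collects the two correction terms of \eqref{sec4.eq.4}--\eqref{sec4.eq.5}, each carrying a factor $\chi_n'(\|u\|^2)$.

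The next step is to let $n\to\infty$ by dominated convergence. The cutoff $\chi_n^2(\|u\|^2)$ converges pointwise to $1$, while $\chi_n'(\|u\|^2)=n^{-1}\chi'(\|u\|^2/n)$ is $O(n^{-1})$ and vanishes pointwise. A direct computation expresses $\{e^{\sqrt{2}i\Ree\langle g,u\rangle},h(u)\}$ as $e^{\sqrt{2}i\Ree\langle g,u\rangle}$ times a polynomial in $u$ of degree at most three (coming from the cubic gradient $\lambda|u|^2 u$ of the nonlinear term of $h$). Hence the integrand on the left is dominated by $C(1+\|u\|^3)$ and $R_n$ is bounded by $C/n$ times a polynomial of the same degree. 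The required $L^1(\mu)$-domination follows from the uniform moment estimate of Lemma \ref{particle_number} on $\omega_\varepsilon(N_\varepsilon^k)$, which transfers to finite moments of $\mu$ of every order via the semiclassical tools of \cite{MR2465733}. A final rescaling $f\leftarrow f/\sqrt{2}$, $g\leftarrow g/\sqrt{2}$ then produces \eqref{sec4.eq.7}.

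The main obstacle is this last step: justifying the interchange of limit and integral, since the Poisson bracket with the nonlinear Hamiltonian grows cubically in $u$ and the exponential test functions provide no decay. Once the moment transfer from Lemma \ref{particle_number} to $\mu$ is in place the rest is a short algebraic check on the Poisson brackets.
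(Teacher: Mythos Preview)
Your proposal is correct and follows essentially the same route as the paper: combine \eqref{sec4.eq.main} with Propositions \ref{sec4.prop.2} and \ref{sec4.prop.1}, then send $n\to\infty$ by dominated convergence. You are in fact more explicit than the paper, which records only the single sentence ``It is a consequence of Proposition \ref{sec4.prop.2}, Proposition \ref{sec4.prop.1} and dominated convergence while taking $n,m\to\infty$''; in particular your handling of the remainder $R_n$ (noting that $\chi_n'=O(1/n)$ is supported where $\|u\|^2\le n$, so the linear factor $\Imm\langle g,u\rangle$ is harmless) and your observation that the cubic growth of $\{e^{\sqrt{2}i\Ree\langle g,u\rangle},h\}$ is controlled by the moment bounds of Lemma \ref{particle_number} transferred to $\mu$ are exactly the points the paper leaves implicit.
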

\begin{proof}
It is a consequence of Proposition \ref{sec4.prop.2}, Proposition \ref{sec4.prop.1}  and dominated convergence while taking $n,m\to \infty$.  
\end{proof}

\medskip
Thus, we come to the following conclusion.
\begin{proof}[Proof of Theorem \ref{KMSclassic}] \;\;
The phase-space $\ell^2(G)$ is a $d$-euclidean space. Let $F,G$ be two smooth functions in $\mathscr{C}_0^\infty(\ell^2(G))$. The inverse Fourier transform gives,
$$
F(u)=\int_{\ell^2(G)} e^{i \Ree\langle f,u\rangle} \; \hat F(f) \; \frac{dL(f)}{(2\pi)^{d/2}}\,, \qquad \text{ and } \qquad 
G(u)=\int_{\ell^2(G)} e^{i \Ree\langle g,u\rangle} \; \hat G(g) \; \frac{dL(g)}{(2\pi)^{d/2}}\,,
$$
where $\hat F, \hat G$ denote the Fourier transforms of $F$ and $G$ respectively.   
Multiplying the equation \eqref{sec4.eq.7}  by $ \hat F(f)  \hat G(g)$ and integrating with respect to   the Lebesgue measure in the variables $f$ and $g$,  one obtains
$$
\beta\;   \int_{\ell^2(G)}  
\{ G(u), h(u)\} \,F(u) \;d\mu\,
=\int_{\ell^2(G)}   \; \{ G(u), F(u)\}  \;d\mu\,.
$$
This proves the classical KMS condition \eqref{CKMS}. 
\end{proof}

\section{Classical KMS condition}
\label{sec:CKMS}
In this section, we point out  that the only probability measure  satisfying the classical KMS condition  is the Gibbs equilibrium measure. This is a known fact and we  provide here a short proof only for reader's convenience. The argument used below is borrowed from the work of M.~Aizenman, S.~Goldstein, C.~Grubber, J.~Lebowitz   and 
P.A.~Martin \cite{MR0443784}.

\begin{proposition}[Gibbs measure]
 \label{ClassicalGibbs}
Suppose that $\mu$ is a Borel probability measure on $\ell^2(G)$ satisfying the classical KMS condition \eqref{CKMS}. Then $\mu$ is the Gibbs equilibrium measure, i.e.,
$$
\frac{d\mu}{dL}= \frac{e^{-\beta \, h(u)} }{z(\beta) }\,, \quad \text{ and } \quad z(\beta) =\int_{\ell^2(G)} {e}^{-\beta \, h(u)} \,dL(u)\,,
$$
with $h(\cdot)$ is the classical Hamiltonian of the Discrete Nonlinear Schr\"odinger equation given by \eqref{hamdnls} and $dL$ is the Lebesgue measure on $\ell^2(G)$.
\end{proposition}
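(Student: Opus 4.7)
The plan is to reformulate the classical KMS identity \eqref{CKMS} as a distributional first-order PDE for $\mu$ on the phase-space $\ell^2(G) \cong \mathbb{R}^{2d}$, where $d = |V|$, and then to solve that PDE. I regard $\mu$ as a distribution via $F \mapsto \mu(F)$, and exploit the fact that every Hamiltonian vector field $X_G F := \{F,G\}$ is divergence-free with respect to the Lebesgue measure $dL$ (Liouville's theorem), so that $\int\{F,G\}\,dL = 0$ for $F,G$ with suitable decay. This permits integration by parts against test functions in the right-hand side of \eqref{CKMS}.

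The core step is to specialize $G$ in \eqref{CKMS} to (cut-off versions of) the real and imaginary coordinate functions $x_k,\,y_k$ in a fixed orthonormal basis of $\ell^2(G)$. Up to a harmless constant factor, $X_{x_k}$ and $X_{y_k}$ coincide with the partial derivatives $\partial_{y_k}$ and $\partial_{x_k}$, so substituting such a $G$ into \eqref{CKMS} and integrating by parts against $F \in \mathscr{C}_0^\infty$ produces
$$\partial_{x_k}\mu \,=\, -\beta\,(\partial_{x_k} h)\,\mu, \qquad \partial_{y_k}\mu \,=\, -\beta\,(\partial_{y_k} h)\,\mu \qquad (k=1,\dots,d)$$
as distributional identities on $\ell^2(G)$. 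Setting $\tilde\mu := e^{\beta h}\mu$, which is a legitimate distribution since $e^{\beta h}$ is smooth, the Leibniz rule converts these into $\partial_{x_k}\tilde\mu = \partial_{y_k}\tilde\mu = 0$ for every $k$. A distribution on the connected space $\ell^2(G)$ with vanishing gradient is a constant multiple of Lebesgue measure, so $\tilde\mu = c\,dL$, equivalently $d\mu = c\,e^{-\beta h}\,dL$. Integrability is immediate: because $-\Delta_G\geq 0$, $\kappa<0$ and $\lambda>0$, one has $h(u)\geq -\kappa\|u\|^2$, hence $h$ is coercive on the finite-dimensional space $\ell^2(G)$ and $z(\beta)<\infty$; the requirement $\mu(1)=1$ then fixes $c = 1/z(\beta)$.

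The point requiring the most care is the specialization $G = x_k,\,y_k$, since the coordinate functions are not compactly supported and a direct application of the KMS identity is not automatically justified. I will handle this by localization. Choose a cut-off $\eta_R \in \mathscr{C}_0^\infty(\ell^2(G))$ equal to $1$ on the ball $B_R$, apply \eqref{CKMS} with $G = \eta_R\,x_k$ (resp.~$\eta_R\,y_k$) and with $F \in \mathscr{C}_0^\infty$ supported in $B_R$; inside $B_R$ the cut-off is invisible, so $X_G = X_{x_k}$ there and the localized identity is precisely the one displayed above, restricted to $B_R$. Letting $R\to\infty$ promotes this to the global distributional equation, after which the remainder of the argument is routine.
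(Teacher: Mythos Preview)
Your proof is correct and follows essentially the same route as the paper: both arguments specialize $G$ to a coordinate function multiplied by a compactly supported bump, read off that $e^{\beta h}\mu$ has vanishing distributional derivatives, and conclude it is a constant multiple of Lebesgue measure. The only cosmetic differences are that the paper first passes to $\nu=e^{\beta h}\mu$ via the product rule $\{Fe^{-\beta h},G\}=e^{-\beta h}(\{F,G\}-\beta F\{h,G\})$ before specializing $G$, and works in complex coordinates $\langle e_j,\cdot\rangle\varphi$ rather than real $x_k,y_k$; your added check that $z(\beta)<\infty$ via the coercivity $h(u)\ge -\kappa\|u\|^2$ is a welcome detail the paper omits.
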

\begin{proof}
Consider the Borel probability measure  $\nu= e^{\beta h(u)} \mu$, so that  for any Borel set $\mathcal{B}$, 
$$
\nu(\mathcal{B})=\int_{\mathcal B} e^{\beta h(u)} d\mu\,.
$$
 Note that, for any $F,G\in \mathscr{C}_{0}^\infty(\ell^2(G))$, the Poisson  bracket satisfies
$$
\big\{F e^{-\beta h(u)}, G \big\}= \big\{F, G \big\}\, e^{-\beta h(u)}-\beta \big\{h, G\big\} \, F(u) e^{-\beta h(u)}\,.
$$
Hence, the classical KMS condition \eqref{CKMS} can be written as 
$$
\mu\left( e^{\beta h(u)} \big\{F e^{-\beta h(u)}, G \big\}\right)=0\,,
$$
or equivalently for any $F,G\in \mathscr{C}_{0}^\infty(\ell^2(G))$,
$$
\nu\left( \big\{F e^{-\beta h(u)}, G \big\}\right)=0\,.
$$
Remark that the classical Hamiltonian $h$ is a smooth $\mathscr{C}^\infty(\ell^2(G))$ function. Hence, the measure $\nu$ satisfies for any $F,G\in \mathscr{C}_0^\infty(\ell^2(G))$, 
$$
\nu\left( \big\{F, G \big\}\right)=0\,.
$$
This condition implies that $\nu$ is a multiple of the Lebesgue measure. Indeed, take 
$g(\cdot)=\langle e_j,  \cdot \rangle\; \varphi(\cdot)$ with $\varphi\in  \mathscr{C}_0^\infty(\ell^2(G))$ being equal to $1$ on the support of $f$. Then the Poisson bracket gives,
$$
\{ f,g\}=- i \partial_j f\,.
$$
So, in a distributional sense the derivatives of the measure $\nu$ are null and therefore $d\nu=c \,dL$ for some constant $c$.  Using the normalisation requirement for $\mu$, one concludes that  $d\nu= \frac{1}{z(\beta)} \,dL$. 
\end{proof}

\bigskip

\appendix
\section{Number estimates}
Consider the quasi free state ${\omega}^0_{\varepsilon}(\cdot)$  given by,
$$
{\omega}^0_{\varepsilon}(\cdot) = \frac{\tr \left(\cdot \;\,{e}^{\beta  \varepsilon\mathrm{d}\Gamma(\Delta_{G}+\kappa 1)}\right) }{\tr\left( e^{\beta \varepsilon \mathrm{d}\Gamma(\Delta_{G}+\kappa 1)}\right) }\;.
$$
The following uniform number of particles estimates are well know. Here we recall them for reader's convenience.  For more details on quasi free states and such inequalities, see  e.g. \cite{MR1441540,MR3366672,MR3719544}.  Remember that the rescaled number operator is given by,
$$
N_{\varepsilon}:=\varepsilon {\textrm d}\Gamma\left(1_{\ell^2(G)}\right)=\varepsilon\sum_{x \in V} a_{x}^{*}a_{x}.
$$
\begin{lemma} \label{particle_number0}
For any $k\in\N$, there exists a positive constant $c_k$ such that
$$
\omega^0_{\varepsilon} (N^k_{\varepsilon}) \leq c_k\,,
$$
uniformly with respect to $\varepsilon\in(0,\bar\varepsilon)$.
\end{lemma}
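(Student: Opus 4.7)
The plan is to diagonalize the one-particle operator $A := -\Delta_G - \kappa \,1_{\ell^2(G)}$, which is strictly positive on $\ell^2(G)$ (since $\Delta_G \le 0$ and $\kappa<0$), and then exploit the fact that $\omega^0_\varepsilon$ becomes a product of independent thermal single-mode states in an eigenbasis. Let $(\phi_j)_{j=1}^d$ be an orthonormal basis of $\ell^2(G)$ with $A\phi_j = \mu_j\phi_j$, $\mu_j>0$, and set $\tilde a_j := a(\phi_j)$, $\tilde n_j := \tilde a_j^*\tilde a_j$. Since $\mathrm{d}\Gamma(1_{\ell^2(G)})$ is invariant under unitary change of basis,
$$
N_\varepsilon = \varepsilon\sum_{j=1}^d \tilde n_j,\qquad \mathrm{d}\Gamma(A) = \sum_{j=1}^d \mu_j\,\tilde n_j,
$$
and the operators $\tilde n_j$ mutually commute.

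Next, I would use the fact that $\omega^0_\varepsilon$ factorizes as a tensor product over modes. More precisely, with respect to the mode decomposition $\mathfrak{F} \simeq \bigotimes_{j=1}^d \mathfrak{F}_j$, the density matrix $\propto e^{-\beta\varepsilon\sum_j\mu_j\tilde n_j}$ splits as $\bigotimes_j e^{-\beta\varepsilon\mu_j\tilde n_j}/Z_j$. Consequently, for any polynomial $P$ in the commuting operators $\tilde n_1,\dots,\tilde n_d$,
$$
\omega^0_\varepsilon\bigl(P(\tilde n_1,\ldots,\tilde n_d)\bigr) = \mathbb{E}\bigl[P(X_1,\ldots,X_d)\bigr],
$$
where $X_1,\ldots,X_d$ are independent geometric random variables with $\mathbb{P}(X_j=n)=(1-q_j)q_j^n$ and $q_j := e^{-\beta\varepsilon\mu_j}\in(0,1)$.

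With this reduction, the problem becomes a classical moment estimate. Applying the elementary convexity bound $(x_1+\cdots+x_d)^k \le d^{k-1}(x_1^k+\cdots+x_d^k)$ pointwise,
$$
\omega^0_\varepsilon(N_\varepsilon^k) = \varepsilon^k\,\mathbb{E}\Bigl[\Bigl(\sum_{j=1}^d X_j\Bigr)^{\!k}\Bigr] \le d^{k-1}\varepsilon^k\sum_{j=1}^d \mathbb{E}[X_j^k].
$$
Standard computation of factorial moments of the geometric distribution gives $\mathbb{E}[X_j^k] \le C_k/(1-q_j)^k$ for a purely combinatorial constant $C_k$. Finally, since $\beta\varepsilon\mu_j\in[0,\beta\bar\varepsilon\|A\|]$, the concavity of $x\mapsto 1-e^{-x}$ yields a uniform lower bound $1-q_j \ge c_{\beta,\bar\varepsilon,A}\,\beta\varepsilon\mu_j$, so that
$$
\varepsilon^k\,\mathbb{E}[X_j^k] \le \frac{C_k\,\varepsilon^k}{(1-q_j)^k} \le \frac{C_k'}{(\beta\mu_j)^k},
$$
uniformly in $\varepsilon\in(0,\bar\varepsilon)$. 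Summing over the finitely many eigenvalues produces the desired constant $c_k$.

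The main obstacle, such as it is, lies in step three, namely formulating the moment bound for a geometric variable in a way that absorbs exactly the factor $\varepsilon^k$; the concavity trick $1-e^{-x}\ge cx$ on a bounded interval is what makes the $\varepsilon$-powers cancel. Everything else is bookkeeping: diagonalization, the tensor factorization of the quasi-free thermal state, and the convexity inequality. Because $|V|=d<\infty$, one never has to worry about convergence of the sum over modes, which is precisely the simplification afforded by working on a finite graph.
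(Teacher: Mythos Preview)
Your argument is correct: diagonalizing the one-particle operator, factorizing the quasi-free state over modes, and reducing to moment bounds for geometric random variables is the standard route, and the concavity estimate $1-e^{-x}\ge c\,x$ on $[0,\beta\bar\varepsilon\|A\|]$ is exactly what makes the powers of $\varepsilon$ cancel. The paper itself does not supply a proof for this lemma; it simply declares the estimate well known and refers the reader to \cite{MR1441540,MR3366672,MR3719544}. So there is nothing to compare against beyond noting that your write-up fills in what the paper leaves to the literature.

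One minor cosmetic point: in place of the convexity inequality $(\sum x_j)^k\le d^{k-1}\sum x_j^k$ you could equally well expand $(\sum_j X_j)^k$ multinomially and use independence together with $\mathbb{E}[X_j^{m}]\le C_m/(1-q_j)^{m}$ for each factor; this avoids the detour through a single-mode $k$-th moment but leads to the same bound. Either way the finiteness of $|V|$ is what keeps the sum over modes harmless, as you note.
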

%\begin{proof}
%Recall that
%$$
%N_{\varepsilon} := \sum_{x \in V} a_{x}^{*}a_{x}.
%$$
%By using the quasi free property of $\omega^0_{\varepsilon} $, one has
%$$
%\omega^0_{\varepsilon}  (a_{x}^{*}a_{x}) = \langle {e}_{x}, \left( \frac{1}{e^{\beta \varepsilon \Delta_{G}} -1} \right) {e}_{x} \rangle_{\ell^{2}(G)}.
%$$
%Therefore, the following inequality holds true
%$$
%\omega^0_{\varepsilon} ( N_{\varepsilon}) \leq \, \varepsilon \, \mathrm{tr}_{\ell^{2} (G)}  \left( \frac{1}{e^{\beta \varepsilon \Delta_{G}} -1} \right)\,.
%$$
%For $\varepsilon\in(0,\bar\varepsilon)$, one has
%$$
%e^{\beta \varepsilon \Delta_{G}} - 1 = \beta \varepsilon \Delta_{G} + O(\varepsilon ^{2}).
%$$
%This concludes the proof.
%\end{proof}

\begin{lemma}  \label{particle_number1}
There exists a positive constant $c$ such that
$$
\frac{\tr({e}^{\beta \varepsilon \mathrm{d}\Gamma(\Delta_{G}+\kappa 1)  })}{\tr( e^{-\beta H_{\varepsilon  }  })
} \leq c\,,
$$
uniformly with respect to  $\varepsilon\in(0,\bar\varepsilon)$.
\end{lemma}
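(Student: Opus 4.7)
The strategy is to split $-\beta H_\varepsilon$ into its quadratic and quartic parts and apply the Peierls--Bogoliubov inequality (the trace-class counterpart of Jensen's inequality and one of the main tools advertised in the abstract). Concretely, set
\begin{equation*}
A := \beta\varepsilon\, \mathrm{d}\Gamma(\Delta_G + \kappa\, 1_{\ell^2(G)}), \qquad B := -\tfrac{\beta\varepsilon^2\lambda}{2}\, I_G,
\end{equation*}
so that $-\beta H_\varepsilon = A + B$. Both $A$ and $B$ are non-positive (from $\Delta_G\le 0$ and $\kappa<0$ for $A$, and from $\lambda>0$, $I_G\ge 0$ for $B$), and $e^{A}/\tr(e^A)$ is precisely the density matrix of the quasi-free state $\omega^0_\varepsilon$ introduced at the beginning of the appendix.

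The Peierls--Bogoliubov inequality reads $\tr(e^{A+B})\geq \tr(e^A)\,\exp\!\big(\tr(Be^A)/\tr(e^A)\big)$. Rearranging and inserting the definition of $B$ produces
\begin{equation*}
\frac{\tr(e^A)}{\tr(e^{-\beta H_\varepsilon})}\;\leq\; \exp\!\left(\frac{\beta\lambda}{2}\,\omega^0_\varepsilon\!\big(\varepsilon^2 I_G\big)\right),
\end{equation*}
so it suffices to bound $\omega^0_\varepsilon(\varepsilon^2 I_G)$ uniformly in $\varepsilon$. A one-line CCR computation gives $a^*_x a^*_x a_x a_x = (a^*_x a_x)^2 - a^*_x a_x\le (a^*_x a_x)^2$, and since the operators $\{\varepsilon\, a^*_x a_x\}_{x\in V}$ are pairwise commuting and non-negative, expanding $(\sum_x \varepsilon a^*_x a_x)^2$ yields
\begin{equation*}
\varepsilon^2 I_G \;\leq\; \sum_{x\in V} (\varepsilon\, a^*_x a_x)^2 \;\leq\; \Big(\sum_{x\in V} \varepsilon\, a^*_x a_x\Big)^2 = N_\varepsilon^2.
\end{equation*}
Lemma \ref{particle_number0} with $k=2$ then provides $\omega^0_\varepsilon(N_\varepsilon^2)\le c_2$ uniformly in $\varepsilon\in(0,\bar\varepsilon)$, and the claim follows with $c := \exp(\beta\lambda c_2/2)$.

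The main obstacle is technical rather than conceptual: the Peierls--Bogoliubov inequality in its textbook matrix form requires a finite-dimensional space, so one must justify its use for the unbounded generators at hand. This is handled by a standard spectral cut-off, for instance replacing $H_\varepsilon$ and $\mathrm{d}\Gamma(\Delta_G+\kappa)$ by their restrictions to $1_{[0,R]}(N_\varepsilon)\mathfrak{F}$, applying the inequality on each such (already finite-dimensional at fixed $R$) subspace, and passing to $R\to\infty$ by monotone convergence, using the trace finiteness ensured by Lemma \ref{sec3.lem1}.
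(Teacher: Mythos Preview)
Your proof is correct and follows essentially the same approach as the paper: both apply the Peierls--Bogoliubov inequality $\ln\tr(e^{A})-\ln\tr(e^{A+B})\le -\omega^0_\varepsilon(B)$ with $A=\beta\varepsilon\,\mathrm{d}\Gamma(\Delta_G+\kappa 1)$ and $B=-\tfrac{\beta\varepsilon^2\lambda}{2}I_G$, then control $\omega^0_\varepsilon(\varepsilon^2 I_G)$ by number-operator moments via Lemma~\ref{particle_number0}. The only cosmetic difference is that the paper keeps the equality $a^*_x a^*_x a_x a_x=(a^*_x a_x)^2-a^*_x a_x$ and hence bounds by $c(\omega^0_\varepsilon(N_\varepsilon^2)+\omega^0_\varepsilon(N_\varepsilon))$, whereas you drop the negative term and use only $\omega^0_\varepsilon(N_\varepsilon^2)$.
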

\begin{proof}
By using a Bogoliubov type inequality, see \cite[Appendix D]{MR1853206}, one has that
$$
\ln(\tr(e^{\beta \varepsilon \mathrm{d}\Gamma(\Delta_{G}+\kappa 1)  })) - \mathrm{ln} (\tr( e^{-\beta H_{\varepsilon  }  })) \leq \beta \;\frac{ \tr\left(  \varepsilon^{2} \, \frac{\lambda}{2} I_{G}   \;e^{\beta \varepsilon \mathrm{d}\Gamma(\Delta_{G}+\kappa 1)  }\right) }{\mathrm{tr}\left( e^{\beta \varepsilon \mathrm{d}\Gamma(\Delta_{G}+\kappa 1)  }\right) }.
$$
According to Definition \ref{bosehubbard}, recall that 
$$
I_{G} = \sum_{x \in V} a^{*} _{x} a^{*} _{x} a _{x}a _{x} \,.
$$
Therefore, there exists $c>0$ such that 
$$
\ln (\tr(e^{\beta \varepsilon \mathrm{d}\Gamma(\Delta_{G}+\kappa 1)  })) -\ln (\mathrm{tr}( e^{-\beta H^{ \varepsilon  }  })) \leq c  \left( {\omega }^0_{\varepsilon}(N_{\varepsilon}^2)+{\omega }^0_{\varepsilon} (N_{\varepsilon})\right) .
$$
Using  Lemma \ref{particle_number0}, one proves the inequality. 
\end{proof}

\begin{lemma}\label{particle_number}
For any $k\in\N$, there exists a positive constant $c_k$ such that
$$
\omega_{\varepsilon} (N^k_{\varepsilon}) \leq c_k\,,
$$
uniformly with respect to $\varepsilon\in(0,\bar\varepsilon)$.
\end{lemma}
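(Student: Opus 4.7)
The plan is to bound $\omega_\varepsilon(N_\varepsilon^k)$ by comparing the Gibbs state $\omega_\varepsilon$ with the quasi-free reference state $\omega_\varepsilon^0$, using that the interaction $I_G$ is a nonnegative operator and that the number operator commutes with $H_\varepsilon$.

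First, I would note the two key structural facts. On the one hand, $I_G=\sum_{x\in V} a_x^*a_x^*a_xa_x$ satisfies $\langle \psi, I_G \psi\rangle = \sum_x \|a_x a_x \psi\|^2 \geq 0$, so $I_G \geq 0$ as a self-adjoint operator on $\mathfrak{F}$. On the other hand, both $d\Gamma(-\Delta_G-\kappa)$ and $I_G$ preserve particle number, hence commute with the rescaled number operator $N_\varepsilon$; in particular $[N_\varepsilon, H_\varepsilon]=0$, so both operators are simultaneously diagonalizable along the decomposition $\mathfrak{F}=\bigoplus_n \otimes_s^n \ell^2(G)$.

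Next, denote by $H_\varepsilon^{(n)}$ and $K_\varepsilon^{(n)}$ the restrictions of $H_\varepsilon$ and of $\varepsilon\, d\Gamma(-\Delta_G-\kappa)$ to the $n$-particle sector. The positivity of $I_G$ yields the operator inequality $H_\varepsilon^{(n)}\geq K_\varepsilon^{(n)}$ on each finite-dimensional sector, and the standard monotonicity of the trace exponential under operator ordering gives
\begin{equation*}
\tr\bigl(e^{-\beta H_\varepsilon^{(n)}}\bigr)\;\leq\; \tr\bigl(e^{-\beta K_\varepsilon^{(n)}}\bigr).
\end{equation*}
Since $N_\varepsilon$ acts as the scalar $\varepsilon n$ on the $n$-th sector, summing with the weights $(\varepsilon n)^k$ yields
\begin{equation*}
\tr\bigl(N_\varepsilon^k\, e^{-\beta H_\varepsilon}\bigr)\;=\;\sum_{n\geq 0}(\varepsilon n)^k\,\tr\bigl(e^{-\beta H_\varepsilon^{(n)}}\bigr)\;\leq\; \tr\bigl(N_\varepsilon^k\, e^{\beta \varepsilon\, d\Gamma(\Delta_G+\kappa)}\bigr).
\end{equation*}

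To conclude, I would divide by $\tr(e^{-\beta H_\varepsilon})$ and insert $\tr(e^{\beta\varepsilon d\Gamma(\Delta_G+\kappa)})$ on top and bottom to rewrite the right-hand side in terms of $\omega_\varepsilon^0$:
\begin{equation*}
\omega_\varepsilon(N_\varepsilon^k)\;\leq\;\omega_\varepsilon^0(N_\varepsilon^k)\;\cdot\;\frac{\tr\bigl(e^{\beta\varepsilon\,d\Gamma(\Delta_G+\kappa)}\bigr)}{\tr\bigl(e^{-\beta H_\varepsilon}\bigr)}\,.
\end{equation*}
The first factor is bounded uniformly in $\varepsilon$ by Lemma \ref{particle_number0}, while the ratio of partition functions is controlled by Lemma \ref{particle_number1}, giving the desired uniform constant $c_k$. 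The only non-routine point is justifying the sector-wise comparison leading to the key inequality above; but this follows cleanly from $I_G\geq 0$ and the fact that everything commutes with $N_\varepsilon$, so no subtle Golden-Thompson argument is needed at this stage — the heavy lifting has already been done in the two preceding lemmas.
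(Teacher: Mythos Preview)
Your proof is correct and follows the same overall architecture as the paper's: bound the numerator $\tr(N_\varepsilon^k e^{-\beta H_\varepsilon})$ by $\tr(N_\varepsilon^k e^{\beta\varepsilon\,d\Gamma(\Delta_G+\kappa)})$, then invoke Lemma~\ref{particle_number0} and Lemma~\ref{particle_number1}. The only difference is the tool used for the trace comparison on each $n$-particle sector: the paper cites the Golden--Thompson inequality (together with $I_G\ge 0$, hence $e^{-c I_G}\le 1$), whereas you use the more elementary fact that $A\ge B$ implies $\tr(e^{-A})\le\tr(e^{-B})$, which follows directly from the min--max principle for eigenvalues. Your route is slightly cleaner here, since Golden--Thompson is a strictly stronger statement than what is needed; both arguments rely on $I_G\ge 0$ and on $[N_\varepsilon,H_\varepsilon]=0$ to pass to finite-dimensional sectors, so the distinction is cosmetic rather than structural.
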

\begin{proof}
A direct consequence of Lemma \ref{particle_number0}, Lemma \ref{particle_number1} and the Golden-Thompson inequality. 
\end{proof}

\section{Technical estimates} 
We refer the reader  to \cite{MR2465733} for more details in the semiclassical analysis on the Fock space.  Here, we only sketch some useful technical results based in the above work. Remember that the KMS states $\omega_\varepsilon$, given by \eqref{sec3.eq.1},  are normal and so we denote,
$$
\omega_\varepsilon(\cdot)=\tr_\mathfrak{F}\left(\rho_\varepsilon\;\cdot\right)\,. 
$$
Furthermore, assume  for a subsequence $(\varepsilon_k)_{k\in \N}$, such that $\lim_{k\to\infty}\varepsilon_k= 0$,  that the set $\{\rho_{\varepsilon_k}\}_{k\in\N}$ admits a unique Wigner measure $\mu$. Then the following result holds true. 
\begin{lemma}
\label{apx.lem-est2} 
For any $\chi \in\mathscr{C}_0^\infty(\R)$ and $f,g\in\ell^2(G)$,  the set $\{ W(f) \rho_{\varepsilon_k} \chi(N_{\varepsilon_k}) W(g)\}_{k\in\N}$ admits a unique Wigner measure given by
$$
\big\{\mu \,e^{\sqrt{2}i \Ree\langle f+g,u\rangle} \chi (\|u\|^2)\big\}\,.
$$
\end{lemma}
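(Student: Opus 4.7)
The strategy is to reduce the statement to a standard semiclassical assertion about $\rho_{\varepsilon}\chi(N_{\varepsilon})$ tested against a single Weyl operator, and then to invoke the Wick calculus of \cite{MR2465733}.

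By the definition \eqref{sec4.eq.wigner} of Wigner measure (extended to sequences of trace-class operators), it suffices to check that for every $\xi\in\ell^2(G)$
\begin{equation*}
\lim_{k\to\infty}\tr_{\mathfrak{F}}\bigl(W(f)\,\rho_{\varepsilon_k}\,\chi(N_{\varepsilon_k})\,W(g)\,W(\xi)\bigr)=\int_{\ell^2(G)}e^{i\sqrt{2}\Ree\langle f+g+\xi,u\rangle}\,\chi(\|u\|^2)\,d\mu(u).
\end{equation*}
Using the cyclicity of the trace and the canonical Weyl commutation relation $W(a)W(b)=e^{-i\frac{\varepsilon}{2}\Imm\langle a,b\rangle}W(a+b)$, I reshuffle $W(g)W(\xi)W(f)$ into a single $W(f+g+\xi)$ up to a phase $e^{i\varepsilon_k\theta}$ with $\theta\in\R$ fixed, which tends to $1$. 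The task thus reduces to proving
\begin{equation*}
\lim_{k\to\infty}\tr_{\mathfrak{F}}\bigl(\rho_{\varepsilon_k}\,\chi(N_{\varepsilon_k})\,W(f+g+\xi)\bigr)=\int_{\ell^2(G)}e^{i\sqrt{2}\Ree\langle f+g+\xi,u\rangle}\,\chi(\|u\|^2)\,d\mu(u).
\end{equation*}

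Next I express $\chi(N_{\varepsilon})$ semiclassically as the Wick quantization of $\chi(\|u\|^2)$. Writing $\chi$ via its inverse Fourier transform and invoking the gauge identity $e^{isN_\varepsilon}W(h)e^{-isN_\varepsilon}=W(e^{is\varepsilon}h)$, the operator $\chi(N_\varepsilon)W(k)$ can be compared to $\bigl(\chi(\|u\|^2)\,e^{\sqrt{2}i\Ree\langle k,u\rangle}\bigr)^{\mathrm{Wick}}$ up to a remainder whose Wick symbol is $O(\varepsilon)(1+\|u\|^2)^N$, using the calculus of \cite[Prop.~2.7, Lem.~2.5]{MR2465733}. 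The uniform moment bounds $\omega_\varepsilon(N_\varepsilon^k)\leq c_k$ provided by Lemma~\ref{particle_number} ensure that the expectation of this remainder in $\rho_\varepsilon$ vanishes as $\varepsilon\to 0$.

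Finally, I apply \cite[Thm.~6.13]{MR2465733}: since $\{\rho_{\varepsilon_k}\}$ has the unique Wigner measure $\mu$ and its $N_\varepsilon$-moments are uniformly controlled, one has $\tr_{\mathfrak{F}}(\rho_{\varepsilon_k} b^{\mathrm{Wick}})\to\int b\,d\mu$ for the symbol $b(u)=\chi(\|u\|^2)\,e^{\sqrt{2}i\Ree\langle f+g+\xi,u\rangle}$, which closes the argument. I expect the main obstacle to be the rigorous control of the $\varepsilon$-dependent remainder in the Wick approximation of $\chi(N_\varepsilon)W(f+g+\xi)$: bounding its Wick symbol uniformly is straightforward, but showing that averaging in $\rho_{\varepsilon_k}$ tames the polynomial growth requires combining the Wick-calculus remainder estimates of \cite{MR2465733} with the moment bounds of Lemma~\ref{particle_number}.
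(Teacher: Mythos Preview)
Your strategy coincides with the paper's: reduce via cyclicity and the Weyl CCR to showing that $\{\rho_{\varepsilon_k}\chi(N_{\varepsilon_k})\}$ has Wigner measure $\mu\,\chi(\|u\|^2)$, approximate $\chi(N_\varepsilon)$ by a quantized symbol up to an $O(\varepsilon)$ remainder, and conclude by \cite[Thm.~6.13]{MR2465733} together with the moment bounds of Lemma~\ref{particle_number}.

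The one substantive difference is the quantization you use in the approximation step. The paper writes $\chi(N_\varepsilon)=(\chi(\|u\|^2))^{\mathrm{Weyl}}+O(\varepsilon)$ in operator norm via the semiclassical pseudodifferential functional calculus \cite[Thm.~8.7]{MR1735654}; this is the standard tool for $f(N_\varepsilon)$ with $f\in\mathscr{C}_0^\infty$, since $N_\varepsilon$ is the Weyl quantization of $\|u\|^2$. You instead propose to compare $\chi(N_\varepsilon)W(k)$ to a Wick operator and cite \cite[Prop.~2.7, Lem.~2.5]{MR2465733}, but those results are formulated for \emph{polynomial} Wick symbols and do not directly deliver the remainder bound you claim for the non-polynomial symbol $\chi(\|u\|^2)\,e^{\sqrt{2}i\Ree\langle k,u\rangle}$. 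This is a minor technical mismatch rather than a real gap: replacing the Wick approximation by the Weyl one (and invoking a standard $\hbar$-functional-calculus result such as \cite[Thm.~8.7]{MR1735654}) repairs the argument immediately, and the rest of your proof goes through unchanged.
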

\begin{proof}
For simplicity, we denote $\varepsilon$ instead of $\varepsilon_k$.  It is enough to prove that the set of Wigner measures for the density matrices $\{\rho_{\varepsilon} \chi(N_{\varepsilon}) \}$ is the singleton
$$
\{\mu \;\chi(\|u\|^2)\}\,. 
$$
In fact, using the Weyl commutation relations, one checks  according to \eqref{sec4.eq.wigner},
\begin{equation*}
\lim_{\varepsilon\to 0}  \tr_\mathfrak{F}\left( W(f) \rho_{\varepsilon} \chi(N_\varepsilon)  W(g) \;W(\eta) \right)= 
\int_{\ell^2(G)}  e^{i\sqrt{2} \Ree\langle f+g+\eta,u\rangle} \;d\nu\,,
\end{equation*}
where $\nu$ is a Wigner measure of the set of density matrices $\{\rho_{\varepsilon} \chi(N_{\varepsilon}) \}$.  Now, using  Pseudo-differential calculus,
$$
\chi(N_\varepsilon)=\left(\chi(\|u\|^2)\right)^{Weyl}+ {O}(\varepsilon)\,,
$$
where the subscript refers to the Weyl $\varepsilon$-quantization and  the difference between the right and left operators is of order $\varepsilon$ in norm  (see e.g. \cite[Thm.~8.7]{MR1735654}). Then \cite[Thm.~6.13]{MR2465733} with  Lemma \ref{particle_number}, gives 
$$
\nu=\mu\; \chi(\|u\|^2)\,.
$$
\end{proof}

\begin{lemma}
\label{apx.lem-est1} 
For any $\chi \in\mathscr{C}_0^\infty(\R)$ and $f\in\ell^2(G)$, there exists $c>0$ such that for all $\varepsilon\in(0,\bar\varepsilon)$,
\begin{equation*}
\|  \chi(N_\varepsilon)\, [N_\varepsilon, W(f)]\; \chi(N_\varepsilon) \|\leq c\; \varepsilon\,, \qquad \text{ and } \qquad 
\|  \chi(N_\varepsilon)\, [H_\varepsilon, W(f)]\; \chi(N_\varepsilon) \|\leq c\; \varepsilon\,.
\end{equation*}
\end{lemma}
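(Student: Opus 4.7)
The whole argument rests on the standard number-localization estimate $\|a^{\#}(g)\,\chi(N_{\varepsilon})\|\leq C_{\chi}\|g\|\,\varepsilon^{-1/2}$, which follows from $\|a^{\#}(g)\psi\|^{2}\leq \|g\|^{2}\langle\psi,(\varepsilon^{-1}N_{\varepsilon}+1)\psi\rangle$ combined with the boundedness of $N_{\varepsilon}$ on $\mathrm{ran}\,\chi(N_{\varepsilon})$; more generally, a Wick monomial of total operator degree $k$ composed on the right with $\chi(N_{\varepsilon})$ has norm $O(\varepsilon^{-k/2})$. The plan is to compute both commutators explicitly via the Weyl relations and then match the $\varepsilon$-prefactors produced by the expansion against the $\varepsilon$-losses produced by this number estimate. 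For the first estimate, the identity $W(f)^{*}a_{x}W(f) = a_{x} + \tfrac{i\sqrt{\varepsilon}}{\sqrt{2}}f(x)$ leads by direct computation to
\begin{equation*}
[N_{\varepsilon}, W(f)] \,=\, W(f)\,\Bigl(\tfrac{i\varepsilon^{3/2}}{\sqrt{2}}\bigl(a^{*}(f)-a(f)\bigr) \,+\, \tfrac{\varepsilon^{2}\|f\|^{2}}{2}\Bigr),
\end{equation*}
and sandwiching between two copies of $\chi(N_{\varepsilon})$, using unitarity of $W(f)$, the linear term contributes $\varepsilon^{3/2}\cdot O(\varepsilon^{-1/2}) = O(\varepsilon)$ while the scalar is $O(\varepsilon^{2})$.

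For the second estimate, the same strategy is applied to $H_{\varepsilon}=\varepsilon\,\mathrm{d}\Gamma(T) + \tfrac{\varepsilon^{2}\lambda}{2}I_{G}$ with $T = -\Delta_{G}-\kappa$. The shift $W(f)^{*}H_{\varepsilon}W(f) - H_{\varepsilon}$ splits in two: the quadratic $\mathrm{d}\Gamma(T)$-part gives $\tfrac{i\varepsilon^{3/2}}{\sqrt{2}}\bigl(a^{*}(Tf)-a(Tf)\bigr) + \tfrac{\varepsilon^{2}\langle f,Tf\rangle}{2}$, while the quartic part, obtained by expanding $(a_{x}^{*}+\bar\alpha_{x})^{2}(a_{x}+\alpha_{x})^{2}$ with $\alpha_{x}=\tfrac{i\sqrt{\varepsilon}}{\sqrt{2}}f(x)$, yields a finite sum of Wick monomials of operator degree $3,2,1,0$ carrying overall $\varepsilon$-prefactors $\varepsilon^{5/2},\,\varepsilon^{3},\,\varepsilon^{7/2},\,\varepsilon^{4}$ respectively (once the global $\tfrac{\varepsilon^{2}\lambda}{2}$ in front of $I_{G}$ is incorporated). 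Sandwiching with $\chi(N_{\varepsilon})$ and applying the $O(\varepsilon^{-k/2})$ bound to each monomial, every contribution turns out to be $O(\varepsilon)$; the worst case is the cubic one, which gives $\varepsilon^{5/2}\cdot O(\varepsilon^{-3/2}) = O(\varepsilon)$. An alternative, slicker route is to invoke the Wick-calculus identity already used in the proof of Proposition \ref{sec4.prop.1}, $\tfrac{i}{\varepsilon}[H_{\varepsilon},W(f)] = C^{Wick}W(f) + R(\varepsilon)^{Wick}W(f)$, with $C(u) = \{\sqrt{2}\,\Ree\langle f,u\rangle,h(u)\}$ a degree-three polynomial and $\|\chi(N_{\varepsilon})R(\varepsilon)^{Wick}\|\leq c\varepsilon$ by \cite[Lemma 2.5]{MR2465733}; since $\|C^{Wick}\chi(N_{\varepsilon})\|=O(1)$, multiplying back by the $\varepsilon$ prefactor yields the desired bound.

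The only nontrivial point is the bookkeeping in the quartic expansion, where one must check that, after the global $\varepsilon^{2}$ in front of $I_{G}$ is incorporated, the largest surviving contribution, coming from the cubic Wick monomial, is indeed $O(\varepsilon)$. No commutation between the outer cutoff and $W(f)$ is needed, because the commutator has been rewritten in the form $W(f)\cdot(\text{polynomial in } a^{\#})$ so that the rightmost $\chi(N_{\varepsilon})$ directly controls the creation and annihilation operators acting on it.
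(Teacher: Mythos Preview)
Your argument is correct and follows essentially the same route as the paper: the paper writes $[H_\varepsilon,W(f)]=W(f)\bigl(h(\cdot+i\tfrac{\varepsilon}{\sqrt2}f)-h(\cdot)\bigr)^{Wick}$, observes by Taylor expansion that the symbol is $\varepsilon\,C_\varepsilon(u)$ for an explicit polynomial $C_\varepsilon$, and then concludes via the number estimate of \cite[Lemma~2.5]{MR2465733}. Your explicit Weyl-shift expansion and $\varepsilon$-power bookkeeping is exactly this computation written out by hand, and your ``alternative, slicker route'' is the paper's own phrasing; so there is no substantive difference.
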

\begin{proof}
The proof of the two inequalities are similar. We sketch the second one. 
Using standard computation in the Fock space (see e.g.~\cite[Proposition 2.10]{MR2465733}),
\begin{eqnarray*}
 [H_\varepsilon, W(f)]&=& W(f) \; \left( h(\cdot+i\frac{\varepsilon}{\sqrt{2}} f)-h(\cdot)\right)^{Wick}\,,
\end{eqnarray*}
where the subscript refers to the \emph{Wick} quantization, see \cite[Section 2]{MR2465733}, and $h$ is the classical Hamiltonian in \eqref{hamdnls}. By Taylor expansion, one writes 
\begin{equation*}
h(u+i\frac{\varepsilon}{\sqrt{2}} f)-h(u)=\varepsilon\,  C_\varepsilon(u)\,,
\end{equation*}
where $C_\varepsilon(u)$ is a polynomial in $u$ which can be computed explicitly. Using the number estimate in 
\cite[Lemma 2.5]{MR2465733}, one proves the inequality.  
\end{proof}

%%%%%%%%%%%%%%%
\section*{Acknowledgements}
The authors are grateful to Jean-Bernard Bru, Sylvain Gol\'enia and Vedran Sohinger for  helpful discussions. 
%The Basque Center of Applied Mathematics  is acknowledged for supporting the  second author's visits to the IRMAR institute. 

%%%%%%%%%%%%%%%
%\bibliography{../BibTex/MyBibtex,../BibTex/KMS}{}
\bibliographystyle{plain}

\end{document}